\tikzstyle{dedge}=[->, thick]
\tikzstyle{lnode}=[scale=0.8]
\newtheorem{theorem}{Theorem}[section]
\newtheorem{corollary}[theorem]{Corollary}
\newtheorem{lemma}[theorem]{Lemma}
\theoremstyle{definition}
\newtheorem{definition}[theorem]{Definition}
\newtheorem*{implementation*}{Implementation}
\theoremstyle{remark}
\newtheorem*{remark*}{Remark}
\newcommand{\blockHeight}[1]{h\left( #1 \right)}
\newcommand{\prevote}{\textsc{Prevote}}
\newcommand{\approve}{\textsc{Approve}}
\newcommand{\precommit}{\textsc{Precommit}}
\newcommand{\lbft}{\textsc{Lisk-Bft}\xspace}
\newcommand{\monoRule}{\textsc{Monotonicity-Rule}\xspace}
\newcommand{\gst}{\text{GST}}
\newcommand{\longestchain}{\textsc{Longest}-\textsc{Chain}\xspace}
\newcommand{\validatorSet}{\mathcal{V}}
\newcommand{\validator}{V}
\begin{document}

\title{A lightweight BFT consensus protocol for blockchains}
\author{Jan Hackfeld \\ Lightcurve GmbH \\ \texttt{jan.hackfeld@lightcurve.io}}

\maketitle

\begin{abstract}
We present a general consensus framework that allows to easily introduce a customizable Byzantine fault tolerant consensus algorithm to an existing (Delegated) Proof-of-Stake blockchain. We prove the safety of the protocol under the assumption that less than~$1/3$ of the validators are Byzantine. The framework further allows for consensus participants to choose subjective decision thresholds in order to obtain safety even in the case of a larger proportion of Byzantine validators. Moreover, the liveness of the protocol is shown if less than~$1/3$ of the validators crash.

Based on the framework, we introduce \lbft, a Byzantine fault tolerant consensus algorithm for the Lisk ecosystem. \lbft integrates with the existing block proposal mechanism, requires only two additional integers in blocks and no additional messages. The protocol is simple and provides safety in the case of static validators if less than~$1/3$ of the validators are Byzantine. For the case of dynamically changing validators, we prove the safety of the protocol assuming a bound on the number of Byzantine validators and the number of honest validators that can change at one time.
We further show the liveness of the \lbft protocol for less than~$1/3$ crashing validators.
\end{abstract}

\section{Introduction}

The Paxos protocol introduced by Lamport in \cite{lamport98} is the basis for most consensus protocols solving the Byzantine agreement or more generally the state machine replication problem. Starting with~\cite{castro02}, there has been a multitude of different practical consensus protocols that adapt the basic Paxos protocol to improve various aspects such as the efficiency and the resistance against Byzantine faults. See \cite{rutti10} for a classification of different consensus protocols.

In this paper, we introduce \lbft, a Byzantine fault tolerant consensus algorithm for the Lisk blockchain that follows this line of research. We assume that there is a set of 
validators that can add blocks to a block tree via a given proposal mechanism (e.g. round-robin). 
Ideally, we would like the validators to propose one block after another always referencing the previously proposed block via a directed edge and hence forming a tree with only one growing branch, i.e., a blockchain. However, due to the latency of communication between the validators or deliberate attacks on the network, there can be multiple child blocks of the same parent block and separate growing branches. We therefore need a consensus protocol for validators to agree on one block at every height of the block tree.

We distinguish between \emph{honest} validators that correctly follow the consensus protocol
and \emph{Byzantine} validators that may behave arbitrarily. The consensus algorithm is
supposed to allow validators to \emph{decide} for exactly one block at every height of the block tree such that the following properties are satisfied:
\begin{itemize}
\item \textbf{Safety.} Two honest validators never decide for \emph{conflicting} blocks, i.e., blocks that are not contained in one branch of the block tree. 
\item \textbf{Liveness.} An honest validator eventually decides for a block at any height.
\item \textbf{Accountability.} A validator can detect if a Byzantine validator violates the consensus protocol and can identify the Byzantine validator.
\end{itemize}

Note that the safety property implies that all decided blocks for an honest validator are in one branch of the block tree and for any height two honest validators never decide for distinct blocks.

In the case of Bitcoin \cite{nakamoto08}, consensus is reached by validators choosing the chain which required the most ``work''. This rule does not satisfy both the safety and liveness property above because no matter how late a validator decides for a block~$B$, there is no guarantee that the block~$B$ will be contained in the Bitcoin blockchain the Bitcoin network agrees on in the future. The reason is that there always can be a different branch of the Bitcoin block tree in the future with more ``work'' although for economic reasons it becomes increasingly unlikely the more blocks are added to the block tree as descendants of~$B$.

Tendermint \cite{tendermint} is one of the first blockchain consensus protocols achieving the safety, liveness and accountability property above.  Namely, Tendermint guarantees that as long as~$<1/3$ of the validators are Byzantine, both the safety and liveness property hold, which is optimal for the partially synchronous system model introduced in~\cite{dwork88}. The Tendermint protocol proceeds in three distinct phases. Once the three phases are completed successfully, the validators decide for a block~$B$  and proceed with proposing and deciding for a descendant of~$B$. This way the block tree in Tendermint is simply a path as for every height the validators decide for a block before proceeding further. We call a consensus protocol with this property \emph{fork-free}. In contrast to that, we refer to a consensus protocol as \emph{forkful} if there is no requirement for validators to achieve consensus, i.e., decide for a block, before adding further blocks to the block tree. 

The \lbft protocol is a forkful protocol, where the block proposal can independently progress before consensus is reached on blocks. This idea of using \emph{speculative Byzantine fault tolerance} in order to decrease the overhead has already been considered in~\cite{kotla09} for the state machine replication problem. We also believe that a forkful protocol can achieve better performance in practice and significantly decrease the communication overhead as consensus
can be reached independent of the block proposal at any later point in time.
For a more detailed treatment of forkful versus fork-free consensus protocols and the message complexity involved, see \cite{buterin18}.

\section{Model and definitions}

In this section, we introduce the blockchain specific terminology as well as the underlying network model. We do not consider the general state machine replication problem, but instead assume that there is a block proposal mechanism, where time is divided into discrete slots and for every time slot there is a designated validator. We also change some of the state machine replication terminology to the blockchain specific context.

A \emph{block tree} is a directed tree with a designated root vertex referred to as \emph{genesis block} and a unique directed path from every vertex in the block tree to the genesis block. We refer to the vertices in the block tree as \emph{blocks}. We assume that blocks can contain arbitrary data such as signatures or messages. Every block~$B$ has a corresponding \emph{height}, denoted by~$h(B)$, which is the number of edges of the unique path from that block to the genesis block. Note that the genesis block has height 0. For a block~$B$ in the block tree, we call a block~$B'$ \emph{ancestor} if~$B'$ is distinct from~$B$ and is on the directed path from~$B$ to the genesis block. Moreover, $B'$ is called a \emph{descendant} of~$B$ if $B$ is an ancestor of~$B'$. If there is a directed edge from a block~$B$ to a block~$B'$, then $B'$ is called the \emph{parent} of~$B$ and $B$ a \emph{child} of~$B'$. We further refer to the subgraph induced by a path from a block without child to the genesis block as a \emph{branch} or a \emph{chain}. For a given chain, the unique block without child block is referred to as the \emph{tip} or \emph{head} of the chain. Two blocks are called \emph{conflicting} if they are not contained in one branch of the block tree, i.e., none is a descendant of the other.

We assume that there is a set of validators~$\validatorSet$ and a \emph{proposal mechanism} for validators to add child blocks of existing blocks to the block tree. We distinguish between three types of validators: 
A validator is \emph{honest} if it obeys the consensus protocol and actively participates by sending messages. 
A validator is further called \emph{offline} or \emph{crashed} if it stops participating in the consensus protocol from a certain point in time onwards, i.e., does not send any message more to the network, but all its messages obey the protocol rules. Moreover,  a \emph{Byzantine} validator can behave arbitrarily, e.g., maliciously violate the consensus protocol or crash.  Classic Byzantine fault tolerance consensus protocols typically yield guarantees assuming certain upper bounds on the number of Byzantine validators. In the case of blockchain, in particular Proof-of-Stake, we would rather have guarantees in terms of the stake represented by a certain set of validators. We can model this by associating a weight~$\omega_\validator \in [0,1]$ to every validator~$\validator$ such that $\sum_{\validator \in \validatorSet} \omega_\validator = 1$, where $\omega_\validator$ can be thought of as the relative stake in the case of a Proof-of-Stake blockchain. 
For the simplicity of exposition, we often do not use the validator weights explicitly but rather say that $>\omega$ (or $<\omega$)  of the validators satisfy a property~$X$ if the set~$M$ of validators satisfying property~$X$ has overall weight larger (or smaller) than~$\omega$, i.e., $\sum_{\validator \in M} \omega_\validator >\omega$ ($\sum_{\validator \in M} \omega_\validator <\omega$). For instance, we regularly assume that $<1/3$ of the validators are Byzantine. 

We further assume that the validators communicate by exchanging messages via the underlying network. We model the network using the \emph{partially synchronous system model} introduced in~\cite{dwork88}. The basic assumption of this model is that in general the network can behave arbitrarily badly, i.e., messages are lost or have a huge delay, but eventually, from a time~$\gst$ onward, the network behaves nicely, i.e., all messages arrive reliably and with a delay of at most~$\Delta$.
The parameter~$\gst$ is called the \emph{global stabilization time}. Messages are further always signed so that a validator can not be impersonated by another validator. The validators are not required to know any bound on~$\gst$ or~$\Delta$. 
Note that the eventually synchronous network is only necessary to show the liveness of the consensus algorithm, but not for the safety. In fact, this assumption on the underlying network is stronger than needed for our consensus to work in practice and our results can easily be extended to a network model with regular long enough good time periods, where the messages arrive reliably and within $\Delta$~time. These good periods only need to be long enough for validators to send enough messages to decide for blocks. For the ease of exposition, however, we use the partially synchronous system model.

Note that due to the impossibility results of Fischer et al. \cite{fischer85,fischer86}, stating that consensus is impossible without randomization in an asynchronous message-passing system with at least one Byzantine or crashing node, the partially synchronous system model is one of the weakest models where consensus without randomization is possible. 
Known consensus algorithms for asynchronous systems using randomization have worse guarantees in terms of maximum number of Byzantine validators, require a large number of local computations or a larger number of messages \cite{cachin05, king14}.

\section{General consensus algorithm framework}
In this section, we define a general consensus protocol with two types of messages, which
is later used as the basis for the \lbft protocol.

\subsection{Protocol rules}
We now describe the protocol rules of the general consensus framework. A validator~$\validator$ can send two types of messages, 
$\prevote(B,T,\validator)$ and $\precommit(B,T,\validator)$, where $B$ and $T$ are blocks in the block tree such that $T=B$ or $T$ is a descendant of~$B$ ($T$ is typically the tip of the current chain of~$\validator$). In an implementation, $B$ and $T$ can be block hashes and $\validator$ the public key and signature of the validator~$\validator$. 
We assume that both types of messages are added to the blocks so that we can refer to the \emph{prevotes} and \emph{precommits} included in the blockchain from the genesis block up to the tip~$T$. 
Moreover, we shortly write that a validator prevotes for~$B$ if $\validator$ sends a~$\prevote(B,T,\validator)$ message and equivalently that $\validator$ precommits for~$B$ if it sends a message~$\precommit(B,T,\validator)$ for a block~$T$.
Next, we describe the protocol rules that must be satisfied when sending prevote and precommit messages and the conditions for validators to decide for blocks.

\begin{definition}\label{def-consensus-framework}
The prevotes and precommits send by a validator~$\validator \in \validatorSet$ must obey the following rules:
\begin{enumerate}[label=(\Roman*)]
\item For any two distinct messages~$\prevote(B,T,\validator)$ and $\prevote(B',T',\validator)$, it holds that $h(B)\neq h(B')$. 
\label{rule_1}
\item For any message~$\precommit(B,T,\validator)$, the blockchain up to block~$T$ must include both \label{rule_2}
\begin{enumerate}[label = (\alph*), ref=\theenumi{}~(\alph*)]
\item one $\prevote(B,\cdot,\validator)$ message and
\item $\prevote(B,\cdot,\cdot)$ messages by~$>2/3$ of the validators.
\end{enumerate}
\item For any two messages  $\precommit(B,T,\validator)$ and  $\prevote(B',T',\validator)$ with $h(B)<h(B')$ at least one of the following conditions is satisfied: \label{rule_3}
\begin{enumerate}[label = (\alph*), ref=\theenumi{}~(\alph*)]
\item $B'$ is a descendant of~$B$. \label{rule_3_a}
\item The blockchain up to~$T'$ contains a block~$\bar{B}$ at height~$h(\bar{B})\geq h(B)$ and $\prevote(\bar{B},\cdot,\cdot)$ messages by~$>2/3$ of the validators.\label{rule_3_b}
\end{enumerate}
\end{enumerate}
Let $\tau_\validator \in (\tfrac{1}{3},1]$ denote the \emph{decision threshold} of a validator~$\validator$. Once validator~$\validator$ receives $\precommit(B,\cdot, \cdot)$ messages by~$>\tau_\validator$ of the validators, validator~$\validator$ \emph{decides for} block~$B$ or \emph{finalizes} block~$B$. This implies that $\validator$ also decides for all ancestors of~$B$.
\end{definition}

The basic intuition behind the protocol rules is that prevotes must not contradict, so validators can cast only one prevote for every height by~\ref{rule_1}. Moreover, in order to send a precommit for a block, a validator has to have send a prevote for that block and received prevotes  by~$>2/3$ of the validators for it due to~\ref{rule_2}. Finally, the intuition behind~\ref{rule_3} is that a validator cannot change to a different chain without a justification, i.e., after a precommit for block~$B$, all prevotes of larger height must be for descendants of~$B$ or, as a justification, there is a block~$\bar{B}$ at height~$h(\bar{B})\geq h(B)$ with $>2/3$~prevotes that allows the validator to start sending prevotes on a different chain.

In the next subsection, we show how the decision threshold relates to the safety property of the protocol. Intuitively, the higher the decision threshold of the validators, the more validators can be Byzantine without losing the safety property. On the other hand, for the liveness property, the decision threshold cannot be too high as otherwise a validator may never decide on a block. 

\subsection{Safety}

In this section, we show that the safety property is fulfilled for the general consensus protocol assuming a bound on the overall weight of Byzantine validators that depends on the decision thresholds of the validators.

\begin{theorem}[Safety]\label{theo-safety}
If the weight of Byzantine validators is $<(\tau -1/3)$ for~$\tau \in (1/3, 1]$, then any two honest validators with decision threshold at least~$\tau$ never decide for conflicting blocks. 
\end{theorem}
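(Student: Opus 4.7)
The plan is to argue by contradiction. Suppose honest proposers $P_1,P_2$ with decision thresholds at least $\tau$ finalize conflicting blocks $B_1,B_2$, and without loss of generality assume $h(B_1)\le h(B_2)$. Two quantities drive the argument. First, from $P_1$'s decision and the Byzantine bound together with rule~(II)(a), the set $A_{1,H}$ of honest precommitters of $B_1$ satisfies $|A_{1,H}|>\tau-(\tau-\tfrac13)=\tfrac13$. Second, from $P_2$'s decision and rule~(II)(b), the set $V_2$ of proposers that have signed a prevote for $B_2$ appearing in a witnessing blockchain satisfies $|V_2|>\tfrac23$.

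The heart of the proof is the following claim, which I would prove by induction on the order in which honest proposers emit their messages: no honest $R\in A_{1,H}$ ever sends a prevote for a block $\bar B$ that conflicts with $B_1$ and has height $\ge h(B_1)$. For the inductive step, consider the first hypothetical offender $R$ and its prospective prevote for such a $\bar B$. If $h(\bar B)=h(B_1)$, then rule~(II)(a) already forced $R$ to prevote for $B_1$ at that height, so rule~(I) is violated and honest $R$ cannot send the message. If $h(\bar B)>h(B_1)$, rule~(III) applied to $R$'s precommit for $B_1$ and the new prevote for $\bar B$ demands a block $\bar B'$ at height $\ge h(B_1)$ in the chain up to $R$'s $T'$ whose prevotes recorded in that chain have total weight $>\tfrac23$. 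Since $R$'s chain passes through $\bar B$, the block $\bar B'$ lies on $\bar B$'s branch and hence conflicts with $B_1$ at height $\ge h(B_1)$. By the inductive hypothesis no member of $A_{1,H}$ has previously sent a prevote for $\bar B'$, so all prevotes for $\bar B'$ sitting inside $R$'s chain are signed by proposers in $\mathcal P\setminus A_{1,H}$, whose total weight is $1-|A_{1,H}|<\tfrac23$. This contradicts rule~(III) and completes the inductive step.

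Applying the claim to $B_2$ itself, which conflicts with $B_1$ at height $h(B_2)\ge h(B_1)$, yields $V_2\cap A_{1,H}=\emptyset$, so $V_2\subseteq\mathcal P\setminus A_{1,H}$ and $|V_2|\le 1-|A_{1,H}|<\tfrac23$, contradicting $|V_2|>\tfrac23$.

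The main obstacle I expect is making the ``first offender'' induction genuinely rigorous. Rule~(III) is phrased in terms of what is contained in a blockchain, not in terms of time, so one has to argue carefully that every prevote recorded inside a block on $R$'s chain must have been emitted by its signer before $R$ emits the new prevote. This follows from unforgeable signatures together with the observation that honest proposers only reference chains whose blocks they have already received, but writing out the bookkeeping cleanly in the partially synchronous model will take some care.
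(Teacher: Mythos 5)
Your proof is correct and follows essentially the same route as the paper: extract a set of $>1/3$ honest precommitters of the lower block, eliminate the equal-height case via rules~(I) and (II), and use rule~(III) with a first-offender argument to show that no block conflicting with $B_1$ at height $\geq h(B_1)$ can ever accumulate $>2/3$ prevotes. The only differences are cosmetic — you make the induction explicit where the paper compresses it into one sentence, and you close with the prevote threshold for $B_2$ rather than the precommit threshold — and your concluding remark about timing/unforgeability is exactly the bookkeeping the paper also leaves implicit.
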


\begin{proof}
Assume, for the sake of contradiction, that there are two conflicting blocks~$B_1$ and~$B_2$ 
as well as an honest validator~$\validator_1$ with decision threshold~$\tau_1$ deciding for~$B_1$ and an honest validator~$\validator_2$ with decision threshold~$\tau_2$ deciding for~$B_2$. Without loss of generality let $h(B_1) \leq h(B_2)$. Note that by assumption we have $\tau\leq \tau_1$ and $\tau \leq \tau_2$.

If $\validator_1$ decides for~$B_1$, it must have received a $\precommit(B_1,\cdot, \cdot)$ messages by~$>\tau_1$ of the validators by definition. 
By assumption, $\tau_1 \geq \tau$ and $<(\tau -1/3)$ of the validators are Byzantine so that there is a set of~$>1/3$ honest or crashing validators~$\validatorSet_1$ that must have sent a $\precommit(B_1,\cdot, \cdot)$ message. 
Note that crashing validators obey the protocol rules, but may stop sending messages from some point onwards.
In particular, by property~\ref{rule_1}, no validator~$\validator \in \validatorSet_1$ sends a $\prevote(B,\cdot, \validator)$ message for a distinct block~$B$ at height~$h(B_1)$ and hence no distinct block~$B$ at height~$h(B_1)$ can receive $>2/3$~prevotes.
This also implies that we cannot have $h(B_1)=h(B_2)$ as there can only be $\leq 2/3$~prevotes for~$B_2$ and hence no validator~$\validator$ obeying the protocol rules would send a $\precommit(B_2,\cdot,\validator)$ message by~\ref{rule_2}. Thus, the threshold of~$\tau_2$~precommit messages for block~$B_2$ cannot be reached and we must have~$h(B_1)<h(B_2)$.

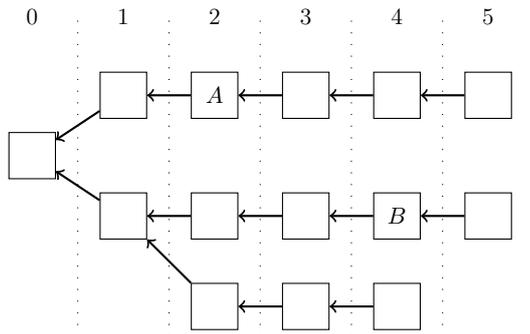
\begin{figure}[H]
\centering
\tikzstyle{block}=[rectangle,draw, minimum width=2em,minimum height=2em,align=center,scale=0.8]
\tikzstyle{dedge}=[<-, thick]

\newcommand{\x}{1.2}
\newcommand{\y}{0.8}
\begin{tikzpicture}
\node[block] (b1) at (0,0){};

\node[block] (b2) at (1*\x,\y) {};
\node[block] (b3) at (2*\x,\y) {$B_1$};
\node[block] (b4) at (3*\x,\y) {};
\node[block] (b5) at (4*\x,\y) {};
\node[block] (b6) at (5*\x,\y) {};

\node[block] (c1) at (1*\x,-\y) {};
\node[block] (c2) at (2*\x,-\y) {};
\node[block] (c3) at (3*\x,-\y) {};
\node[block] (c4) at (4*\x,-\y) {$B_2$};
\node[block] (c5) at (5*\x,-\y) {};

\node[block] (d2) at (2*\x,-2.5*\y) {};
\node[block] (d3) at (3*\x,-2.5*\y) {};
\node[block] (d4) at (4*\x,-2.5*\y) {};

\foreach \i in {1,...,5} {
	 \pgfmathtruncatemacro\j{\i+1}
	\draw[dedge](b\i)--(b\j);

}
\draw[dedge](b1)--(c1);
\foreach \i in {1,...,4} {
	 \pgfmathtruncatemacro\j{\i+1}
	\draw[dedge](c\i)--(c\j);

}
\draw[dedge](c1)--(d2);
\draw[dedge](d2)--(d3);
\draw[dedge](d3)--(d4);

\foreach \i in {0,...,5} {
\node[scale=0.8] (l1) at  (\i*\x,2.3*\y){\i};
\draw[loosely dotted] (\i*\x+0.5*\x,-3*\y) -- (\i*\x+0.5*\x,2.3*\y);
}
\end{tikzpicture}
\caption{Block tree with two conflicting blocks~$B_1$ and~$B_2$.}\label{fig-blocktree-conflicting-blocks}
\end{figure}

By~\ref{rule_3}, every validator~$\validator\in  \validatorSet_1$ only sends a prevote message for a block that is not a descendant of~$B_1$ if there is a block~$\bar{B}$ at height~$h(\bar{B})\geq h(B_1)$ with $\prevote(\bar{B},\cdot,\cdot)$ messages by~$>2/3$ of the validators. 
As all validators in~$\validatorSet_1$ follow the protocol, there have to be $>2/3$~prevotes for such a block~$\bar{B}$ by validators not in~$\validatorSet_1$. 
This is a contradiction, as $\validatorSet_1$ is a set of $>1/3$~validators by assumption. 
Hence, no block that is not a descendant of~$B_1$ can obtain $>2/3$~prevotes and thus no validator following the protocol sends a precommit for~$B_2$. 
This means that the decision threshold of~$\tau_2$ precommit messages for block~$B_2$ cannot be reached contradicting that the honest validator~$\validator_2$ decided for~$B_2$.
\end{proof}

\begin{remark*}
Note that for $\geq 1/3$~Byzantine validators a consensus protocol that guarantees both safety and liveness  in the partially synchronous system model is impossible as shown in~\cite{dwork88}. This is also the case for the consensus framework above because Byzantine validators can for example not send any prevotes  and thus prevent any decision for a block as the threshold of $>2/3$~prevotes is not reached. In order to have both the liveness and safety property, we hence cannot have $\geq 1/3$~Byzantine validators.
The purpose of the flexible subjective decision threshold is to allow for optimistic participants (those assuming a small fraction of Byzantine validators) to decide for blocks earlier by only requiring a small decision threshold~$\tau_\validator$. On the other hand, pessimistic participants can decide for a block only with a very high decision threshold (with the risk of never making a decision if too many validators are Byzantine or offline) with the guarantee that $\geq \tau_\validator -1/3$ must violate the protocol rules for a conflicting block to be decided. In particular, in the case that slashing conditions are added to punish Byzantine validators by burning their deposits, this allows to choose a decision threshold
depending on the desired safety guarantee in terms of deposit at stake.
\end{remark*}

\subsection{Liveness}

For proving the liveness condition of the consensus framework from Definition~\ref{def-consensus-framework}, we rely on the partially synchronous system model. Recall that the basic assumption is that after the global stabilization time of~$\gst$, all messages arrive reliably with a delay of at most~$\Delta$. We assume that the time slots for validators to add blocks to the block tree have a duration of at least~$2 \Delta$. This way the underlying network is fast enough for the message about a new block to reach every validator and all validators to send new prevote and precommit messages that reach every other validator before the next block is proposed. This way, all prevotes and precommits can be included in the next block and validators at regular times have the same information about the block tree (we assume that blocks, prevote and precommit messages that were lost before time~$\gst$ are simply rebroadcast).

Additionally, for the liveness property to hold, validators need to follow the same \emph{fork choice rule}, i.e., a rule that defines how the validators choose one branch of the block tree as their current chain. More precisely, a fork choice rule is a function~$f$ that, given the block tree~$\mathcal{T}_\validator$ of validator~$\validator$ (i.e., the block tree containing all blocks received by validator~$\validator$) and the arrival times of all blocks as input, returns one branch of the block tree~$\mathcal{T}_\validator$. We refer to this branch as the \emph{canonical chain} of~$\mathcal{T}_\validator$. We say that a validator~$\validator$ \emph{follows a fork choice rule}~$f$ if it adds new proposed blocks to the canonical chain and only sends prevote and precommit messages for blocks in the canonical chain. For the liveness proof, we use the fork choice rule defined below. 

\begin{definition}[\longestchain]
Let~$\mathcal{T}_\validator$ be the current block tree of validator~$\validator$ and $\bar{h}$ be the largest height of a block in~$\mathcal{T}_\validator$ with prevotes by~$>2/3$ of the validators included in one branch of~$\mathcal{T}_\validator$.  
Then the \longestchain fork choice rule returns the longest branch that contains a block~$\bar{B}$ at height~$\bar{h}$ and $\prevote(\bar{B},\cdot,\cdot)$ messages by~$>2/3$ of the validators. Ties are broken in favor of the complete chain received first.
\end{definition}

The essential property that the fork choice rule needs to satisfy is that the branch returned always contains a block~$\bar{B}$ of largest height with prevote messages by~$>2/3$ of the validators. The reason is that by~\ref{rule_3}, once a validator casts a precommit for~$\bar{B}$, it can only prevote and precommit for blocks on the same branch as~$\bar{B}$ (unless there is a block of height at least~$h(\bar{B})$ with $>2/3$~prevotes). Thus, by requiring the canonical chain to include~$\bar{B}$, the validators automatically satisfy~\ref{rule_3} when casting prevotes and precommits. In general, other fork choice rules that satisfy that the canonical chain always contains~$\bar{B}$ are possible and yield the liveness property. An example is \emph{Immediate message-driven GHOST} proposed for Ethereum \cite{buterin18forkchoice}.

If~$<1/3$ of the validators are Byzantine, there can be at most one block at a given height with $>2/3$~prevotes by the the validators by~\ref{rule_1}. Thus, in this case $\bar{B}$ in the definition above is unique.
Further, note that the tie-breaking in favor of the chain seen first is rather arbitrary and
also block hashes, i.e., identifiers associated with every block, can be used. The main point of the tie-breaking is to have a rule that quickly makes the block tree change to a state (by the new proposed blocks) where there are no more ties.

In the following theorem, we show liveness assuming a bound on the weight of crashed validators. The reason for not considering
Byzantine validators is that this would require to make stronger assumptions on the block proposal mechanism. Furthermore, for a concrete application of the general consensus framework it is usually necessary to show liveness again for the concrete block proposal mechanism and additional rules how validators send prevote and precommit messages.

\begin{theorem}[Liveness]\label{theo-general-consensus-liveness}
Assume all validators obey the general consensus protocol and \longestchain fork choice rule and  ~$<1/3$ of the validators crash. Then for any~$l \in \mathbb{N}$, an honest validator~$\validator$ with decision threshold~$\tau_\validator \in (1/3,2/3]$ will eventually decide on a block at height~$l$.
\end{theorem}

\begin{proof}
Consider an arbitrary set of blocks proposed and prevotes as well as precommits cast by all validators according to the consensus protocol from Definition~\ref{def-consensus-framework}. 
Assume that we have reached time~$\gst+2\Delta$.
This means that all messages arrive  reliably within time~$\Delta$ and all previously lost messages have been broadcast again and received by all validators.
In particular, the block tree~$\mathcal{T}_\validator$ at the beginning of every block proposal time slot is the same for every 
validator~$\validator$. Note that due to lost or delayed messages before time~$\gst$, there can initially be multiple chains of the same length
that contain a block of largest height with prevotes by~$>2/3$ of the validators. Because of different arrival
times of blocks, validators could decide for different canonical chains. However, after the first block is 
proposed, these ties are broken and the \longestchain rule yields the same canonical chain for every honest validator. 

Let~$l\in \mathbb{N}$ and $h_{\text{max}}$ be the largest height that any honest validator cast a prevote for. 
We show that an honest validator~$\validator$ with decision threshold~$\tau_\validator \in (1/3,2/3]$ will eventually decide on a block at height~$l$.
The canonical chain of all honest validators continues growing until there are two consecutive blocks~$B_1$
and~$B_2$ added to the chain with~$h(B_i)>\max\{l, h_{\text{max}}\}$ for~$i=1,2$.

We first show that any honest validator~$\validator$ can send a $\prevote(B_1,\cdot, \validator)$ message.
Assume~$\validator$ previously cast a precommit message for a block~$B'$. We show that the precommit for~$B'$ does not prevent $\validator$ to prevote for~$B_1$. By assumption, $B_1$ was added to the chain that contains the block of largest height~$\bar{B}$ for which~$>2/3$ of the validators cast prevotes (at the time when $B_1$ is added to the chain).
If $B'$ is on a branch not containing~$\bar{B}$, then by \ref{rule_3} validator~$\validator$ can prevote for~$B_1$ because we have $h(B')< h(\bar{B})$ as $B'$ must have $>2/3$~prevotes and $\overline{B}$ is the block of largest height with $>2/3$~prevotes. Note that there can be at most one block at every height with $>2/3$~prevotes because of~\ref{rule_1} and the fact that $>2/3$ of the validators honestly follow the protocol.
If $B'$ is on the same branch as~$\overline{B}$, then by the choice of~$\overline{B}$ either $B'=\overline{B}$ or $B'$ is an an ancestor of~$\overline{B}$. In both cases, the precommit for~$B'$ allows $\validator$ to prevote for any descendants of~$B'$, in particular, for~$B_1$. Thus, $\validator$ can prevote for~$B_1$.

Therefore, all~$>2/3$ honest validators can send $\prevote(B_1,\cdot,\cdot)$ messages and these messages reach all validators before the next block is proposed because we assume that the block proposal time slots have duration~$>2 \Delta$. This means that the honest validator of~$B_2$ can include these prevote messages for~$B_1$ into the block~$B_2$. After receiving~$B_2$,
all honest validators can now send a $\precommit(B_1,\cdot,\cdot)$ messages. Hence, $B_1$ receives precommits by~$>2/3$ of the validators and any validator with decision threshold~$\tau \in (1/3,2/3]$ will decide for~$B_1$ and all ancestors of~$B_1$.
In particular, any honest validator will decide for a block at height~$l$ since $h(B_1)>l$. This shows the claim.
\end{proof}

\subsection{Accountability}

We assume that all messages $\prevote(B,T,\validator)$ and $\precommit(B,T,\validator)$ are signed by the validator~$\validator$.
Using the current chain up to block~$T$, any validator can verify if the messages sent by a validator~$\validator$ satisfy the consensus protocol rules~\ref{rule_1}--\ref{rule_3}.
Note that we assume that it is impossible that a validator sends a message exclusively to one honest validator without all other validators eventually being able to learn about it, as the underlying peer-to-peer network will gossip the messages to all validators.

\subsection{Lightweight consensus messages}

In this section, we describe how multiple prevote and precommit messages can be concisely expressed via a single message, which we call an \emph{approve}  message. An $\approve(k,p,T,\validator)$ message contains a block~$T$, two integers~$k,p \in \mathbb{N}$ with~$k < h(T)$ and $p < h(T)$ as well as the validator~$\validator$ sending the message. In an implementation, $T$ can be the hash of the block~$T$ and $\validator$ the public key and signature of the message by validator~$\validator$.
Moreover, for the message~$\approve(k,p,T,\validator)$ to be valid, $p$ must be the height of the block~$\bar{B}$ of largest height that is an ancestor of~$T$ and the chain up to~$T$ includes prevotes
by~$>2/3$ of the validators for~$\bar{B}$. As a convention, we assume that the genesis block has prevotes by all validators. The height~$p$ can be viewed as redundant information that can be derived from the chain up to block~$T$. However, it is essential for a simple expression of the protocol rules and short proofs of any violation of the protocol rules.
We now state the rules that validators must satisfy when sending approve messages.

\begin{definition}[\monoRule]\label{def-approve-rules}
The approve messages by a validator~$\validator$ satisfy the \monoRule if for any two distinct approve messages $\approve(k,p,T,\validator)$ and $\approve(k',p',T',\validator)$ with~$k\leq k'$ the following conditions are satisfied:
\begin{enumerate}[label=(\roman*)]
\item It holds that $h(T)\leq k'$. \label{condition_A_approval}
\item We have $p\leq p'$.\label{condition_B_approval}
\end{enumerate}
\end{definition}

Let $B_0,B_1,\ldots, B_{l}$ be the current chain of validator~$\validator$ and $T=B_{r}$ for~$r \in \{0,1,\ldots, l\}$. Then the intuition behind an $\approve(k,p,T,\validator)$ message is that $\validator$ approves the blocks $B_{k+1},B_{k+1},\ldots, B_r$.
We require that $k < h(T)$ holds, such that a non-empty set of blocks up to block~$T$ is approved. 
Now, condition~\ref{condition_A_approval} of the \monoRule requires two  approve messages to be for blocks with disjoint sets of heights.
Note that, in particular, we cannot have $k=k'$ for two distinct messages as otherwise we would obtain $h(T)>k=k'$.
Moreover,  \ref{condition_B_approval} ensures that a validator only switches to another chain extending a different block with $>2/3$~prevotes if that block has no smaller height. 

We now describe how an approve message translates to prevote and precommit messages.  

\begin{definition}\label{def-trans-approve}
Let $B_0,B_1,\ldots, B_{l}$ be one branch of the block tree of validator~$\validator$. 
Then an $\approve( k, p, B_r, \validator)$ message with $r\in \{k+1,\ldots, l\}$ implies the following prevotes and precommits: 
\begin{enumerate}[label=(\alph*)]
\item $\validator$ casts a $\prevote(A, B_r,\validator)$ for every block~$A \in \{B_{k+1},B_{k+2}, \ldots, B_r\}$. \label{transform_approval_A}
\item Considering the prevotes and precommits included in the chain~$C=B_0,B_1,\ldots, B_r$ (given via  approve  messages), let 
\begin{align*}
j_1 & :=\max  \left( \{0 \leq s \leq r \mid \exists \ \precommit(B_s,\cdot, \validator) \text{ in chain } C \} \cup \{-1\} \right) \\
j_2 &:=\max  \left( \{0 \leq  s \leq k \mid \nexists \ \   \prevote(B_s,\cdot, \validator) \text{ in chain } C\} \cup \{-1\} \right) \\
j & :=\max\ \{j_1, j_2\}+1.
\end{align*}
Then $\validator$ casts a $\precommit(A,B_r,\validator)$ for every block~$A$ in the set~$\{B_j,\ldots, B_r\}$ that has (implied) prevotes by~$>2/3$ of the validators included in the chain~$C$. 
\label{transform_approval_B}
\end{enumerate}
\end{definition}

The idea is that the validator approves the blocks $B_{k+1},B_{k+2}, \ldots, B_r$ by prevoting for them. Moreover, validator~$\validator$ sends a maximum number of precommits for blocks starting from its previous precommit while satisfying protocol rule \ref{rule_2}. Note that for the precommits in \ref{transform_approval_B}, it is important that only the prevotes in the chain~$C$ and not those cast in \ref{transform_approval_A} are taken into account, so that for any $\precommit(A,B_r,\validator)$ we have~$h(A)\leq p$. This property is needed in the proof of Lemma~\ref{lem-approve-satisfies-protocol-rules}.

Next, we show that if a validator satisfies the \monoRule for the approve messages, then the implied prevote and precommit messages satisfy the protocol rules from Definition~\ref{def-consensus-framework}.  

\begin{lemma}\label{lem-approve-satisfies-protocol-rules}
If the approve messages by a validator~$\validator$ satisfies the \monoRule and $\validator$ sends no additional prevote or precommit messages, then the prevotes and precommits implied by the transformation in Definition~\ref{def-trans-approve} satisfy the protocol rules~\ref{rule_1}--\ref{rule_3}.
\end{lemma}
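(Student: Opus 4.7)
The plan is to verify the three rules of Definition~\ref{def-consensus-framework} in turn by tracing every implied prevote or precommit back to the approve message that produced it. Rules~\ref{rule_1} and~\ref{rule_2} fall out directly from the transformation of Definition~\ref{def-trans-approve} together with the \monoRule, whereas rule~\ref{rule_3} requires a careful case analysis on which approve messages carry the precommit and the prevote in question, and will be the main work of the proof.

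For rule~\ref{rule_1}, within a single $\approve(k,p,T,P)$ the implied prevotes sit at the pairwise distinct heights $k+1,\dots,h(T)$, and for two distinct approve messages with $k \leq k'$ condition~\ref{condition_A_approval} of the \monoRule forces $h(T) \leq k'$, so the two height intervals are disjoint. For rule~\ref{rule_2}, the $>2/3$ prevote requirement is explicit in part~\ref{transform_approval_B} of the transformation; the required prevote of $P$ follows from the choice $j = \max(j_1,j_2)+1$, which guarantees every precommitted $B_i$ has $i > j_2$, so either $i > k$ (and $P$'s own prevote is emitted by the same approve message via part~\ref{transform_approval_A}) or $i \leq k$ (and a prevote of $P$ for $B_i$ is already in $C$ by the defining property of $j_2$). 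For rule~\ref{rule_3}, let the precommit for $B$ come from $m_1 = \approve(k_1,p_1,T_1,P)$ and the prevote for $B'$ from $m_2 = \approve(k_2,p_2,T_2,P)$, with $h(B) < h(B')$. If $m_1 = m_2$ both blocks lie on the same chain and~\ref{rule_3_a} is immediate. If $m_1 \neq m_2$ and $k_1 \leq k_2$, the remark following Definition~\ref{def-trans-approve} gives $h(B) \leq p_1$, condition~\ref{condition_B_approval} of the \monoRule gives $p_1 \leq p_2$, and the block of height $p_2$ on the chain up to $T' = T_2$ furnishes the witness required by~\ref{rule_3_b} (and if $B$ itself already lies on that chain, then~\ref{rule_3_a} applies directly).

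The main obstacle is the remaining case $k_2 < k_1$, where the \monoRule now only yields $h(T_2) \leq k_1$ and $p_2 \leq p_1$, both pointing the wrong way for~\ref{rule_3_b}. Here the plan is to show that $B$ necessarily lies on the chain up to $T_2$, so that~\ref{rule_3_a} applies, by contradiction. Suppose not; since $h(B) < h(B') \leq h(T_2)$, the chain $C_2$ carries some block $A \neq B$ at height $h(B)$. If $h(B) \geq k_2+1$, then $m_2$ implicitly prevotes for $A$ at height $h(B)$, whereas the precommit for $B$ via $m_1$ forces (by the already-established rule~\ref{rule_2}) a prevote of $P$ for $B$ at the same height, contradicting rule~\ref{rule_1}. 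If instead $h(B) \leq k_2$, then the inequality $h(B) > j_{2,1}$ implicit in the precommit of $m_1$ forces $P$ to have prevoted for the block $B^{(1)}_{k_2+1}$ on chain $C_1$; let $m_\star$ be the approve message emitting this prevote, so $k_\star \leq k_2$. If $m_\star = m_2$, then chains $C_1$ and $C_2$ would agree at height $k_2+1$, and hence at all smaller heights, in particular at $h(B)$, giving $A = B$---a contradiction. Otherwise the \monoRule applied to $m_\star$ and $m_2$ yields $h(T_\star) \leq k_2$, which contradicts the $h(T_\star) \geq k_2+1$ forced by $m_\star$ prevoting at height $k_2+1$. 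The subtlety of this case is precisely that the \monoRule applied directly between $m_1$ and $m_2$ does not close the argument; one must chase a third approve message $m_\star$ of $P$ to reach the contradiction.
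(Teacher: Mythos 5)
Your proposal is correct, and for rules~\ref{rule_1}, \ref{rule_2} and the branch $k_1<k_2$ of rule~\ref{rule_3} it coincides with the paper's argument, including the two key facts that $h(B)\leq p_1$ (because the precommits of an approve message only count prevotes already in the chain) and that condition~\ref{condition_B_approval} then hands you the witness block at height~$p_2$ for~\ref{rule_3_b}. Where you genuinely diverge is the branch $k_2<k_1$. The paper disposes of it more directly: assuming \ref{rule_3_a} fails, the chain up to~$T_1$ can contain no prevote by~$P$ at height~$h(B')$ (by rule~\ref{rule_1}, since $P$ already prevoted for~$B'$ on another branch), and since $h(B')\leq h(T_2)\leq k_1$ by condition~\ref{condition_A_approval} this forces $j_2\geq h(B')$ for~$m_1$, so every precommit implied by~$m_1$ sits at height strictly above~$h(B')$ --- contradicting $h(B)<h(B')$. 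You instead prove the stronger structural statement that $B$ must lie on the chain up to~$T_2$, so that \ref{rule_3_a} itself holds, splitting on whether $h(B)\geq k_2+1$ (a rule~\ref{rule_1} clash at height~$h(B)$) or $h(B)\leq k_2$ (chasing the third approve message~$m_\star$ that emitted $P$'s prevote at height~$k_2+1$ on the precommit chain, and ruling it out via condition~\ref{condition_A_approval}). I checked the details of your~$m_\star$ argument and they hold: $j_2^{(1)}<h(B)\leq k_2<k_2+1\leq k_1$ indeed forces the prevote at height~$k_2+1$ to exist in the chain up to~$T_1$, and the dichotomy $m_\star=m_2$ versus $m_\star\neq m_2$ is exhaustive. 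The trade-off is that the paper's route is shorter and never needs a third message, while yours yields the sharper conclusion that in the $k_2<k_1$ case the prevote is always for a descendant of the precommitted block rather than merely that the case is vacuous.
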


\begin{proof}
Consider two messages \approve$(k,p,T,\validator)$ and \approve$(k',p',T',\validator)$ with~$k\leq k'$. Because of condition~\ref{condition_A_approval} of the \monoRule, we have $h(T) \leq k'$ and thus
\begin{align*}
\{k+1,k+2, \ldots, h(T)\} \cap \{k'+1,k'+2, \ldots, h(T')\} = \emptyset.
\end{align*}
Therefore all prevotes are cast for blocks at distinct heights, i.e., \ref{rule_1} is satisfied. 

Next, consider an arbitrary $\precommit(A,B_r,\validator)$ that is cast according to the transformation of an $\approve(k,p,B_r,\validator)$ message, where we use the notation from Definition~\ref{def-trans-approve}. 
We then have $A\in \{B_j,B_{j+1},\ldots, B_r\}$, where $j=\max \{j_1,j_2\}+1$. As $j_2< j$, validator~$\validator$ must have cast a prevote for~$A$ and by definition of the transformation there are $\prevote(A,\cdot, \cdot)$ messages by~$>2/3$ of the validators included in the chain up to~$B_r$.
Hence, \ref{rule_2} is satisfied for all precommit messages.

Finally, consider a $\precommit(A,T,\validator)$ message and a $\prevote(A',T',\validator)$ message with~$h(A)<h(A')$. If~$A'$ is a descendant of~$A$, then \ref{rule_3_a} is satisfied. Otherwise, we must show that \ref{rule_3_b} is satisfied, i.e.,  there exists a
block~$\bar{B}$ at height~$h(\bar{B})\geq h(A)$ with $\prevote(\bar{B},\cdot, \cdot)$ messages by~$>2/3$ of the validators
contained in the blockchain up to~$T'$.

From now on, assume that \ref{rule_3_a}  does not hold. Then both $\precommit(A,T,\validator)$ and $\prevote(A',T',\validator)$  cannot be implied by the same message $\approve(k,p,T,\validator)$, as otherwise $A'$ is a descendant of~$A$. Thus, there have to be two distinct messages $\approve(k,p,T,\validator)$ and $\approve(k',p',T',\validator)$ such that the first implies the message $\precommit(A,T,\validator)$ and the second implies the message  $\prevote(A',T',\validator)$. As $\prevote(A',T',\validator)$ is implied by $\approve(k',p',T',\validator)$, we must have $k'<h(A')\leq h(T')$. 
For the sake of contradiction, let us assume that $k\geq k'$. Then the branch up to~$T$ cannot contain any implied prevote by~$\validator$ for a block at height~$h(A')$, because we assume that this branch does not contain~$A'$ (otherwise again $A'$ is a descendant of~$A$), $\validator$ already send a  $\prevote(A',T',\validator)$ message on a different branch and \ref{rule_1} is satisfied. 
Using the notation from Definition~\ref{def-trans-approve}, this implies that $j_2\geq h(A')$ for the chain~$C$ up to~$T$. 
Therefore any precommit implied by the message $\approve(k,p,T,\validator)$ must be for blocks at height larger than~$h(A')$. This would mean $h(A)>h(A')$, a contradiction. Hence, we must have~$k<k'$. 

As $k<k'$ holds, we must have $p\leq p'$ by condition~\ref{condition_B_approval} of the \monoRule.
Note that we must have $h(A)\leq p$ as $p$ is the largest height of a block with prevotes by~$>2/3$ of the validators included in the chain up to~$T$. Let $\bar{B}$ be the block at height~$p'$ that is an ancestor of~$T'$. We have $h(A)\leq p\leq p'=h(\bar{B})$ and there are  $\prevote(\bar{B},\cdot, \cdot)$ messages by~$>2/3$ of the validators contained in the blockchain up to~$T'$. Thus,~\ref{rule_3} is also satisfied. 
\end{proof}

By Lemma~\ref{lem-approve-satisfies-protocol-rules}, the protocol rules~\ref{rule_1}--\ref{rule_3} are satisfied when validators cast prevotes and precommits implicitly via approve messages. This means that also Theorem~\ref{theo-safety} holds for the case that
validators only use approve messages for the consensus algorithm.

\begin{corollary}
Assume for the consensus algorithm validators only use approve messages that are transformed according to Definition~\ref{def-trans-approve} and for any honest or crashing validator the approve messages satisfy the \monoRule.
If the weight of Byzantine validators is $<(\tau -1/3)$ for~$\tau \in (1/3, 1]$, then any two honest validators with decision threshold at least~$\tau$ never decide for conflicting blocks. 
\end{corollary}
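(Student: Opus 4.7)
The plan is to derive the corollary as an immediate consequence of Lemma~\ref{lem-approve-satisfies-protocol-rules} combined with Theorem~\ref{theo-safety}; no new combinatorial argument is needed, only a check that the hypotheses of the theorem transfer across the transformation of Definition~\ref{def-trans-approve}.

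First I would fix an arbitrary honest block proposer~$P$. By hypothesis, $P$ sends only approve messages (no explicit prevotes or precommits) and these satisfy the \monoRule. Applying Lemma~\ref{lem-approve-satisfies-protocol-rules} to~$P$ yields that the prevote and precommit messages implied by Definition~\ref{def-trans-approve} satisfy the protocol rules~\ref{rule_1}--\ref{rule_3}. Doing this for every honest proposer, I obtain a set of implied prevote/precommit messages for which the hypotheses of Definition~\ref{def-consensus-framework} hold for all honest proposers, while Byzantine proposers are, as in Theorem~\ref{theo-safety}, allowed to behave arbitrarily.

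Next, I would note that the decision rule is unchanged between the two settings: a proposer with decision threshold~$\tau_P$ decides for a block~$B$ exactly when precommit messages for~$B$ by~$>\tau_P$ of the proposers appear in its chain, regardless of whether those precommits were cast explicitly or are implied by approve messages through Definition~\ref{def-trans-approve}. Combined with the assumption that $<(\tau - 1/3)$ of the block proposers are Byzantine, this places the situation squarely within the setup of Theorem~\ref{theo-safety}, from which the conclusion follows verbatim. The only small subtlety worth flagging is that ``implied'' messages must play the same role as explicit ones in both the $>2/3$ prevote count (used inside rules~\ref{rule_2} and~\ref{rule_3_b}) and the $>\tau_P$ precommit count used for deciding; this is already baked into the validity condition for approve messages (requiring $p$ to be the height of the highest ancestor of~$T$ with $>2/3$ prevotes in the chain) and into Definition~\ref{def-trans-approve}, so this is not a real obstacle but merely an observation to state before invoking Theorem~\ref{theo-safety}.
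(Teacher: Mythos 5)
Your proposal is correct and follows exactly the route the paper takes: the corollary is obtained by applying Lemma~\ref{lem-approve-satisfies-protocol-rules} to conclude that the implied prevotes and precommits of every honest proposer satisfy rules~\ref{rule_1}--\ref{rule_3}, and then invoking Theorem~\ref{theo-safety}. Your additional remark that implied messages enter the $>2/3$ prevote counts and the $>\tau_P$ decision count in the same way as explicit ones is a sensible observation that the paper leaves implicit.
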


\subsection{Dynamic set of validators}

So far, we only considered a static set of validators~$\validatorSet$. In blockchains networks, however, this set is typically changing over time as validators may want to stop participating in Proof-of-Stake and withdraw their deposit or different delegates get voted in the top ranks in a Delegated Proof-of-Stake system. 
The interesting question is what the implications of a  dynamically changing set of validators are for the safety and liveness guarantees from the last sections. 

In this section, we consider the approach that a change of validators (e.g., because of withdrawals, deposits or votes) is always possible and comes into effect without requiring finality, i.e., without requiring that the block or a descendant of the block containing the information of the change of validators receives precommits by a certain threshold of validators. 
Note that if the set of  validators changes over time, there can now be blocks at the same height which assume a different set of validators. 
For instance, one chain may contain the information regarding a change of validators, whereas another chain does not. 
Thus, we now associate a unique set of \emph{active validators}~$\validatorSet_B$ to every block~$B$ of the block tree, which are the active consensus participant for that block. 
For one branch $B_0,B_1,B_2\ldots$ of the block tree, the active validators for a block~$B_l$ in that branch must be given by the information in the ancestor blocks in that branch, i.e., blocks $B_0,B_1,\ldots,B_{l-1}$. 
In particular, the genesis block~$B_0$ defines $\validatorSet_{B_1}$, the active validators for block~$B_1$. 
The active validators  associated to a block~$B$ always contain the validator forging~$B$ and every validator that can cast prevotes and precommits for the block~$B$. 
We further say that a validator~$\validator$ is active at block~$B$ if $\validator \in \validatorSet_B$. 

Furthermore, apart from a change in validators, it can happen that the weight~$\omega_\validator$ associated with a validator~$\validator$ changes (e.g., due to a change in stake). For the following results, we need to distinguish between the changed and unchanged weights of validators. Hence, we model an increase of weight of validator~$\validator$ from~$\omega_\validator$ to~$\omega_{\validator}'$ by assuming that a new validator~$\validator'$ with weight~$\omega_{\validator}'-\omega_\validator$ is added to the new set of validators. Similarly, for a decrease of weight from~$\omega_\validator$ to~$\omega_{\validator}'$, we assume that the original set of validators contains two validators~$\validator$ and $\validator'$ with weights~$\omega_{\validator'}$ and $\omega_\validator-\omega_{\validator'}$, respectively. The change of weight in this case is then modeled by~$\validator'$ leaving the set of validators. This allows us to only consider a change in the set of validators and not additionally consider changes of weights. Note that for any active set of validators~$\validatorSet_B$ for a block~$B$ the weights always need to satisfy $\sum_{\validator \in \validatorSet_B} \omega_\validator=1$.

Moreover, if validators cast prevotes and precommits implicitly via approve messages, then for a $\approve( k, p, B_r, \validator)$  message to be valid, we require~$\validator$ to be active at the blocks $B_{k+1},\ldots, B_r$. 
This way any implied $\prevote(A, B_r,\validator)$ cast according to Definition~\ref{def-trans-approve} is for a block~$A$ where $\validator$ is active. As a precommit is only implied if $\validator$ previously made an implicit prevote, it holds that also all implied precommits are for blocks where $\validator$ is active. 

We now show sufficient conditions for the safety and liveness property for the case of validators that change according to the mechanism described above.

\begin{theorem}\label{theo-unconstrained-change-validators}
Let $\mathcal{B}$ be the set of blocks of the block tree. Then the following properties hold:
\begin{enumerate}[label=(\alph*)]
\item If $\bigcap_{B\in \mathcal{B}} \validatorSet_B$ contains $>(4/3-\tau)$ validators following the consensus protocol, then two honest validator in~$\bigcup_{B\in \mathcal{B}} \validatorSet_B$ with decision threshold at least~$\tau \in (1/3,1]$ never decide for conflicting blocks (Safety).
\item 
Assume all active validators in the block tree follow the general consensus protocol and \longestchain fork choice rule and $<1/3$ of the validators in any active validator set crash. 
If after the global stabilization time is reached, there is regularly an unchanged validator set for two consecutive blocks, then for any $l \in \mathbb{N}$, an honest validator~$\validator$ with decision threshold~$\tau_\validator \in (1/3,2/3]$ will eventually decide on a block at height~$l$ (Liveness).
\end{enumerate}
\end{theorem}

\begin{proof}
The proof of the first part of the claim is similar to the proof of Theorem~\ref{theo-safety}.
Assume, for the sake of contradiction, that there are two conflicting blocks~$B_1$ and~$B_2$ 
as well as an honest validator~$\validator_1 \in \bigcup_{B\in \mathcal{B}} \validatorSet_B$ with decision threshold~$\tau_1$ deciding for~$B_1$ and an honest validators~$\validator_2\in \bigcup_{B\in \mathcal{B}} \validatorSet_B$ with decision threshold~$\tau_2$ deciding for~$B_2$. 
By assumption, we have $\tau\leq \tau_1$ and $\tau \leq \tau_2$. Hence, there must be $\precommit(B_1,\cdot, \cdot)$ 
messages cast by~$>\tau$ of the validators in~$\validatorSet_{B_1}$ and also $\precommit(B_2,\cdot, \cdot)$ 
messages cast by~$>\tau$ of the validators in~$\validatorSet_{B_2}$.

Let $\validatorSet$ be the set of $>(4/3-\tau)$~validators obeying the protocol that is contained in $\bigcap_{B\in \mathcal{B}} \validatorSet_B$ by assumption. 
Note that $\validatorSet$ can contain both honest as well as crashing validators.
In particular, we have $\validatorSet \subseteq \validatorSet_{B_1}\cap \validatorSet_{B_2}$.
The set of validators $\validatorSet_{B_1} \setminus \validatorSet$ has weight $<1-(4/3-\tau)= \tau -1/3$, but there are  $\precommit(B_1,\cdot, \cdot)$ 
messages cast by~$>\tau$ of the validators in~$\validatorSet_{B_1}$. Hence, there are $>1/3$~validators in~$\validatorSet$ that sent a $\precommit(B_1,\cdot, \cdot)$ message. Analogously,
there are also $>1/3$~validators in~$\validatorSet$ that must have sent a $\precommit(B_2,\cdot, \cdot)$ message. 
 
Without loss of generality, we assume $h(B_1) \leq h(B_2)$.  
Let $\validatorSet_1 \subseteq \validatorSet$ be the subset 
of $>1/3$~validators that send a precommit for~$B_1$. By~\ref{rule_2}, every validator~$\validator \in \validatorSet_1$ 
must have also cast a  $\prevote(B_1,\cdot, \validator)$ message. Thus, by~\ref{rule_1}, no validator in~$\validator \in \validatorSet_1$ 
sends a prevote for any other block at height~$h(B_1)$. We must therefore have $h(B_1)<h(B_2)$ as otherwise $B_2$ cannot obtain $>2/3$~prevotes
because $\validatorSet_1$ is contained in every active set of validators.

By~\ref{rule_3}, every validator~$\validator\in \validatorSet_1$ only sends a prevote message for a block that is not a descendant of~$B_1$ if there is a block~$\bar{B}$ at height~$h(\bar{B})\geq h(B_1)$ with $\prevote(\bar{B},\cdot,\cdot)$ messages by~$>2/3$ of the validators. As all validators in~$\validatorSet_1$ obey the protocol, there have to be $>2/3$~prevotes for such a block~$\bar{B}$ by validators not in~$\validatorSet_1$. This is a contradiction, as $\validatorSet_1$ is a set of $>1/3$~validators
that is contained in any active set of validators. Hence, no block that is not a descendant of~$B_1$ can obtain $>2/3$~prevotes and thus no validator obeying the protocol sends a precommit for~$B_2$. This means that the decision threshold of $\tau_2$~precommit messages for block~$B_2$ cannot be reached contradicting that the honest validator~$\validator_2$ decided for~$B_2$.

The liveness property follows from Theorem~\ref{theo-general-consensus-liveness}, as the proof only requires an unchanged set of honest validators with weight~$>2/3$ for two consecutive blocks.
\end{proof}

The above theorem shows that the safety guarantee is only slightly weaker if only a small fraction of the validators change. However, several changes of a small fraction of validators over time could amount to almost all validators changing. The following result shows that as long as the validators finalize blocks in between these changes of validators, we can improve the safety guarantee above.

\begin{theorem}\label{theo-constrained-change-validators}
Let $\alpha \in (0,1/3)$ and $\tau \in (1/3+2 \alpha, 1]$. 
Let $B, B'$ be two blocks in one branch such that $B'$ is the descendant of~$B$ with minimal height such that the chain up to~$B'$ contains $>\tau$~precommits for~$B$ or a descendant of~$B$. 
We assume that between all such pairs of blocks~$B$ and $B'$ in the block tree $<\alpha$~of the validators obeying the protocol change. 
If any active validator set contains $>4/3-\tau+2 \alpha$~validators obeying the protocol, then two honest validators with decision threshold at least~$\tau$ never decide for conflicting blocks.
\end{theorem}

\begin{proof}
Let us assume, for the sake of contradiction, that there are two conflicting blocks~$B_1$ and $B_2$ such that an honest validator~$\validator_1$ with decision threshold~$\tau_1$ decides for~$B_1$ and another honest validators~$\validator_2$ with decision threshold~$\tau_2$ decides for~$B_2$. 
Let $A_0$ be the block of largest height that is a common ancestor of~$B_1$ and $B_2$. 
Moreover, let~$A_1$ be the first descendant of~$A_0$ on the branch containing~$B_1$ such that the branch contains $>\tau$~precommits for~$A_1$. 
Similarly, let $A_2$ be the first descendant of~$A_0$ on the branch containing~$B_2$ such that the branch contains $>\tau$~precommits for~$A_2$. 
The setting is shown in Figure~\ref{fig-blocktree-conflicting-blocks-changing-validators}. As $\tau_1\geq \tau$, there must be at least~$\tau$ precommits for~$B_1$ contained in the branch containing~$B_1$ and therefore we must have $A_1=B_1$ or $A_1$ is an ancestor of~$B_1$. 
The same holds for~$A_2$ and $B_2$.
Without loss of generality let~$h(A_1) \leq h(A_2)$. 

Let $\validatorSet_0$ be the subset of all validators in~$\validatorSet_{A_0}$ that obey the protocol. 
The weight of all validators in~$\validatorSet_0$ is $>4/3-\tau+2 \alpha$ by assumption.
We also know by assumption that $<\alpha$~of the validators obeying the protocol change before a block on a branch receives $>\tau$~precommit. 
This means that the weight of validators in $\validatorSet_0\setminus \validatorSet_{A_i}$ is $<\alpha$ for $i \in \{1,2\}$.
Hence, $\validatorSet:=\validatorSet_0 \cap \validatorSet_{A_1}\cap \validatorSet_{A_2}$ is a set of validators of weight $>4/3 -\tau$ that obey the protocol. 
The validator set $\validatorSet$ is further contained in the active set of validators associated to any block on the path from~$A_0$ to~$A_1$ or the path from~$A_0$ to~$A_2$. 

The set of validators $\validatorSet_{A_1} \setminus \validatorSet$ has weight $<1-(4/3-\tau)= \tau -1/3$, but there are $\precommit(A_1,\cdot,\cdot)$ by~$>\tau$ of the validators in the chain containing~$B_1$.
Hence, there must be a subset~$\validatorSet_1\subseteq \validatorSet$ of $>1/3$~validators that cast a precommit for~$A_1$. 
This also implies that we cannot have $h(A_1)=h(A_2)$ as there can only be $\leq 2/3$~prevotes for~$A_2$ and hence no validator~$\validator\in  \validatorSet_{A_2}$ obeying the protocol would cast a $\precommit(A_2,\cdot,\validator)$ by~\ref{rule_2}. Thus, the threshold of $\tau$~precommits for block~$A_2$ cannot be reached in this case and we must have $h(A_1)<h(A_2)$.

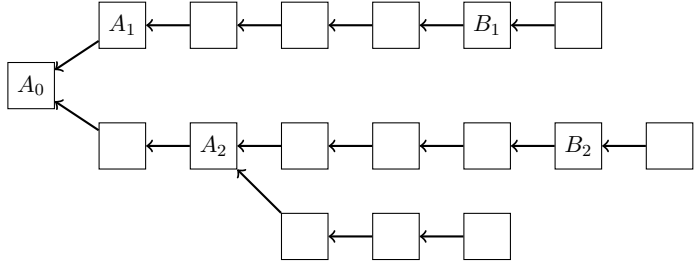
\begin{figure}[H]
\centering
\tikzstyle{block}=[rectangle,draw, minimum width=2em,minimum height=2em,align=center,scale=0.8]
\tikzstyle{dedge}=[<-, thick]

\newcommand{\x}{1.2}
\newcommand{\y}{0.8}
\begin{tikzpicture}
\node[block] (b1) at (0,0){$A_0$};

\node[block] (b2) at (1*\x,\y) {$A_1$};
\node[block] (b3) at (2*\x,\y) {};
\node[block] (b4) at (3*\x,\y) {};
\node[block] (b5) at (4*\x,\y) {};
\node[block] (b6) at (5*\x,\y) {$B_1$};
\node[block] (b7) at (6*\x,\y) {};

\node[block] (c1) at (1*\x,-\y) {};
\node[block] (c2) at (2*\x,-\y) {$A_2$};
\node[block] (c3) at (3*\x,-\y) {};
\node[block] (c4) at (4*\x,-\y) {};
\node[block] (c5) at (5*\x,-\y) {};
\node[block] (c6) at (6*\x,-\y) {$B_2$};
\node[block] (c7) at (7*\x,-\y) {};

\node[block] (d2) at (3*\x,-2.5*\y) {};
\node[block] (d3) at (4*\x,-2.5*\y) {};
\node[block] (d4) at (5*\x,-2.5*\y) {};

\foreach \i in {1,...,6} {
	 \pgfmathtruncatemacro\j{\i+1}
	\draw[dedge](b\i)--(b\j);

}
\draw[dedge](b1)--(c1);
\foreach \i in {1,...,6} {
	 \pgfmathtruncatemacro\j{\i+1}
	\draw[dedge](c\i)--(c\j);

}
\draw[dedge](c2)--(d2);
\draw[dedge](d2)--(d3);
\draw[dedge](d3)--(d4);

\end{tikzpicture}
\caption{Block tree with two conflicting blocks~$B_1$ and~$B_2$.}\label{fig-blocktree-conflicting-blocks-changing-validators}
\end{figure}

By~\ref{rule_3}, every validator~$\validator\in  \validatorSet_1$ only sends a prevote message for a block that is not a descendant of~$A_1$ if there is a block~$\bar{B}$ at height~$h(\bar{B})\geq h(A_1)$ with $\prevote(\bar{B},\cdot,\cdot)$ messages by~$>2/3$ of the validators. 
As all validators in~$\validatorSet_1$ obey the protocol, there have to be $>2/3$~prevotes for such a block~$\bar{B}$ by validators not in~$\validatorSet_1$. 
This is a contradiction, as $\validatorSet_1$ is a set of $>1/3$~validators that is contained in the active set of validators associated with every block on the path from~$A_0$ to~$A_2$.  
Hence, block~$A_2$ cannot obtain $>2/3$~prevotes and thus no validators obeying the protocol sends a precommit for~$A_2$. 
This means that the threshold of $\tau$~precommits for block~$A_2$ cannot be reached, which is a contradiction.
\end{proof}

\begin{remark*}
Theorem~\ref{theo-constrained-change-validators} outlines a way how we can maintain close to optimal safety guarantees while allowing dynamic validators. Assume that we only allow a change of validators with overall weight~$<1/100$ at a time, i.e., after a change of validators from~$\validatorSet$ to~$\validatorSet'$ we require the new set of validators~$\validatorSet'$ to finalize a block before another change of validators can occur. If we assume that we always have $>2/3+2/100$~validators following the protocol at any given time, then Theorem~\ref{theo-constrained-change-validators} yields that the safety property is satisfied. 

Allowing an arbitrary large change of validators at a given time while preserving the safety property under the assumption that at any time $>2/3$ of the obey the protocol, requires a much more complex mechanism of two set of validators, i.e., both $\validatorSet$ and $\validatorSet'$, to vote on the same blocks.
For the \emph{Casper the Friendly Finality Gadget} consensus algorithm this approach is sketched in~\cite{buterin17}.
\end{remark*}

\section{Weighted \lbft consensus protocol}

We now describe \lbft, a consensus protocol for the Lisk ecosystem,  which is based on the general consensus framework from the last section. 
The consensus protocol is flexible enough to work with different configurations of the Delegated Proof-of-Stake validator selection mechanism in the Lisk ecosystem. 

In the Delegated Proof-of-Stake system in Lisk Mainnet, in every round of 103 consecutive blocks there are 103 distinct delegates that can propose a block. 
The 103 delegates include the 101 delegates with largest vote weight and two randomly selected additional delegates. 
Only the 101 delegates with largest vote weight are supposed to participate in the consensus protocol and cast prevotes and precommits, whereas the two additional delegates only propose blocks. 
This is modeled by setting the active set of validators to the 103 forging delegates for each block of the round, assigning uniform weights of ~$\omega_\validator=1/101$ to each of the 101 delegates with largest vote weight and weight $\omega_\validator=0$ to the two additional forging delegates. .
In particular, we then have $\sum_{\validator \in \validatorSet_B} \omega_\validator = 1$ for every block~$B$ with with active validator set~$\validatorSet_B$. 
For different numbers of forging delegates per round and delegates participating in the  \lbft consensus protocol the weights can be assigned analogously such that  $\sum_{\validator \in \validatorSet_B} \omega_\validator = 1$ holds for every block.

Instead of using separate consensus messages, in \lbft all prevotes and precommits are implied by two additional integers added to blocks using
the same idea as the lightweight consensus messages introduced in the last section. 
Hence, the additional two integers in the blocks are the only overhead of the \lbft consensus protocol.

\begin{definition}[\lbft]\label{def-lisk-bft}
Let~$\tau \in (1/3, 1]$ and $\rho\in \mathbb{N}$ be two protocol parameters. 
Let $C:=B_0,B_1,\ldots, B_{l-1}$ be the current chain of validator~$\validator$ and $h_0\in \{0,1,\ldots, l\}$ be the height of the first block in the current chain since when $\validator$ has been continuously in the active validator set.
\begin{enumerate}
\item When $\validator$ proposes block~$B_{l}$ as a child of~$B_{l-1}$, it adds the following information to the block: 
\begin{description}[style=multiline,leftmargin=2cm,font=\normalfont]
\item[$h_{\text{previous}}$] The largest height of a block that $\validator$ previously proposed (on any branch) and $0$ if $\validator$ has not proposed any block.
\item[$h_{\text{prevoted}}$] The height of the block of largest height with $>2/3$~prevotes in the current chain $B_0,B_1,\ldots, B_{l-1}$.
\end{description}
In general, for a block~$B$ we use the notation~$h_{\text{previous}}(B)$ and $h_{\text{prevoted}}(B)$ to refer to the values above.
\item 
If $h_{\text{previous}}(B_l)<h(B_l)$ holds, then the information in block~$B_{l}$ implies prevotes and precommits for blocks in the following order: 
\begin{enumerate}[label=(\alph*),ref=\theenumi{}~(\alph*)]
\item Let $k:=\max \{h_{\text{previous}}(B_l), h_0-1, l - \rho\}$. 
Then the block implies a $\prevote(A, B_l,\validator)$ cast for every block~$A \in \{B_{k+1},B_{k+2}, \ldots, B_l\}$. \label{lisk-bft-transformation-a}
\item Considering the implied prevotes and precommits included in the chain~$C$ (via the additional information in the block), let 
\label{lisk-bft-transformation-b}
\begin{align*}
j_1 & :=\max \left(  \{0 \leq s < l \mid \exists \ \precommit(B_s,\cdot, \validator) \text{ in chain } C \} \cup \{-1\} \right) \\
j_2 &:=\max \left(  \{0 \leq  s \leq h_{\text{previous}}(B_l) \mid \nexists \ \   \prevote(B_s,\cdot, \validator) \text{  in chain } C\} \cup \{-1\} \right) \\
j & :=\max\ \{j_1, j_2, h_0-1, l - \rho\}+1.
\end{align*}
Then the block implies a $\precommit(A,B_l,\validator)$ for every block~$A$ in the set~$\{B_j,\ldots, B_l\}$ that has prevotes by~$>2/3$~validators included in the chain~$C$. 
\label{transform_liskbft_B}
\end{enumerate}
\item A block~$B$ and all ancestors of~$B$ are finalized if there are precommits for~$B$ by $>\tau$~validators included in the chain~$C$.
\item Validators follow the \longestchain
rule, where the block of largest height with~$>2/3$ prevotes in a chain $A_0,A_1,\ldots, A_s$ is computed taking only the prevotes implied by blocks $A_0,A_1,\ldots, A_{s-1}$, and not those implied by~$A_s$, into account. 
\item We assume that the block proposal mechanisms assign block slots to the active validators in a round-robin fashion. 
That is, if a validator set~$\validatorSet$ is first active for a block~$B$, then the block slot for block~$B$ and the next $|\validatorSet|-1$ blocks slots are each assigned to one distinct validator in $\validatorSet$. 
We further require that $\rho \geq 3 \cdot |\validatorSet_B|$ holds for any block~$B$ in the block tree.
\end{enumerate}
\end{definition} 

The \lbft protocol is an application of the general consensus framework from the last section,
where validators only cast prevotes and precommits via the two additional integers contained in blocks.
The values~$h_{\text{previous}}$ and~$h_{\text{prevoted}}$ in block~$B_l$ correspond to the validator~$\validator$ first proposing block~$B_l$ and afterwards sending the message  $\approve(h_{\text{previous}},h_{\text{prevoted}},B_l,\validator)$. 
However, the transformation of the two integers in blocks to prevote and precommit messages above contains two modifications compared to the transformation for approve messages in Definition~\ref{def-trans-approve}. 
First of all, we have $k=\max \{h_{\text{previous}}, h_0-1, l - \rho\}$ instead of~$k=h_{\text{previous}}$. 
The reason for enforcing $k\geq h_0-1$ is to ensure that the implied prevotes are only for blocks at heights where validator~$\validator$ was active. 
We further require that $k\geq  l - \rho$ holds for performance reasons so that it is sufficient to consider the previous $\rho-1$~blocks in the chain when updating the number of prevotes for blocks. 
Note that we require that $\rho \geq 3 \cdot |\validatorSet_B|$ holds for any block~$B$ in the block tree so that~$\rho$ is large enough such that a block can be finalized after $\rho-1$~additional blocks are added on top, i.e., the prevotes and precommits implied by~$\rho-1$ subsequent blocks are sufficient such that a block can reach~$>\tau$ precommits, see the proof of the liveness condition in Theorem~\ref{theo-lisk-bft-props} for details.
Secondly, we have $j =\max\ \{j_1, j_2, h_0-1, l - \rho\}+1$ instead of $j :=\max\ \{j_1, j_2\}+1$. Again, this ensures that all implied precommits are for blocks where $\validator$ was active and also for performance reasons all precommits are only for a subset of the previous $\rho-1$~blocks.
Overall, it is important to note that the prevotes and precommits implied by the transformation above are a subset of the prevotes and precommits implied
according to Definition~\ref{def-trans-approve} when instead sending the message $\approve(h_{\text{previous}},h_{\text{prevoted}},B_l,\validator)$.

A subtlety of the transformation~\ref{lisk-bft-transformation-b} above is that we consider only the prevotes implied by all blocks up to~$B_{l-1}$ and not the prevotes implied by~\ref{lisk-bft-transformation-a}. This is analogous to the transformation for approve messages in Definition~\ref{def-trans-approve}.

We establish some basic properties of the \lbft protocol in the following lemma. 

\begin{lemma}\label{lem-lisk-bft}
Let $B_1$ and $B_2$ be two distinct blocks proposed by a validator~$\validator$ obeying the \lbft protocol. Then the following properties hold: 
\begin{enumerate}[label=(\alph*)]
\item The \longestchain rule implies that a branch with tip~$B$ is chosen over a branch with tip~$A$, if the following conditions are satisfied: \label{lem-lisk-bft-property-a}
\begin{align*}
h_{\text{prevoted}}(B)> h_{\text{prevoted}}(A) \quad \text{ or } \quad  
\left(h_{\text{prevoted}}(B)= h_{\text{prevoted}}(A) \text{ and } h(B)>h(A)\right).
\end{align*}
\item If $B_1$ is proposed before~$B_2$, then the following inequalities hold:\label{lem-lisk-bft-property-b}
\begin{align*}
h_{\text{previous}}(B_1) \leq h_{\text{previous}}(B_2), \quad 
h_{\text{prevoted}}(B_1) \leq h_{\text{prevoted}}(B_2), \quad 
h(B_1) \leq h_{\text{previous}}(B_2).
\end{align*}
\item If $B_1$ is proposed before~$B_2$ and $h_{\text{prevoted}}(B_1)=h_{\text{prevoted}}(B_2)$ holds, then we must have~$h(B_1)<h(B_2)$.\label{lem-lisk-bft-property-c}
\item Block~$B_1$ is proposed before~$B_2$ if and only if \label{lem-lisk-bft-property-d}
\begin{align}
(h_{\text{previous}}(B_1),h_{\text{prevoted}}(B_1),h(B_1))<_{\text{lex}} (h_{\text{previous}}(B_2),h_{\text{prevoted}}(B_2),h(B_2)), \label{inequ-lexicographic}
\end{align}
where $<_{\text{lex}}$ is the lexicographical ordering.
\item If any two blocks by validator~$\validator$ satisfy the inequalities in \ref{lem-lisk-bft-property-b}, then the implied prevotes and precommits according to the \lbft protocol satisfy the conditions of the general consensus framework in Definition~\ref{def-consensus-framework}. \label{lem-lisk-bft-property-e}
\end{enumerate}
\end{lemma}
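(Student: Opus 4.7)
The plan is to prove the five parts in the order given, since each uses those before it. For part~\ref{lem-lisk-bft-property-a}, the statement is essentially a restatement of part~5 of Definition~\ref{def-lisk-bft}: by construction $h_{\text{prevoted}}(B)$ is exactly the quantity the modified \longestchain rule uses to rank the branch whose tip is the parent of~$B$, so strictly larger $h_{\text{prevoted}}$ wins the primary comparison and, on ties, the longer branch wins. For part~\ref{lem-lisk-bft-property-b}, the first and third inequalities are immediate from the definition of $h_{\text{previous}}$ as the largest height of any previously proposed block by~$P$ (with $B_1$ itself contributing when $B_2$ is proposed). For the middle inequality I would show that the maximum, over all branches of~$P$'s current tree, of the largest block height with $\geq 68$ prevotes in the pre-tip chain is monotone non-decreasing in time: existing branches keep their prevotes, and any new leaf~$C$ attached as a child of an existing tip~$T$ picks up the prevotes implied by~$T$ in its pre-tip chain, so the maximum cannot decrease. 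By part~\ref{lem-lisk-bft-property-a}, $h_{\text{prevoted}}(B)$ equals this maximum at the moment $B$ is proposed, which gives the inequality.

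Part~\ref{lem-lisk-bft-property-c} follows quickly: if $h_{\text{prevoted}}(B_1)=h_{\text{prevoted}}(B_2)$, then at the time $P$ proposes $B_2$ every branch passing through~$B_1$ already attains the current maximum prevoted value, so it is among the candidate branches for the \longestchain rule. By the longest-branch tie-break, the chosen parent~$T_2$ of~$B_2$ must satisfy $h(T_2)\geq h(B_1)$, and hence $h(B_2)>h(B_1)$. Part~\ref{lem-lisk-bft-property-d} is then a short case analysis: the forward direction uses~(b) to place the two triples in non-strict lex order and (c) to promote the comparison to strict when the first two coordinates tie; the reverse direction is the contrapositive, since if $B_2$ were proposed before~$B_1$ the forward direction would give the reversed strict lex inequality.

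For part~\ref{lem-lisk-bft-property-e}, I would view the header of block~$B_l$ as the implicit message $\approve(h_{\text{previous}}(B_l), h_{\text{prevoted}}(B_l), B_l, P)$. The inequalities of~(b), applied to any two such implicit messages, give exactly the two conditions of the \monoRule: the third inequality of~(b) yields condition~\ref{condition_A_approval} and the second yields condition~\ref{condition_B_approval}. Lemma~\ref{lem-approve-satisfies-protocol-rules} then delivers rules~\ref{rule_1}--\ref{rule_3} for the prevotes and precommits of Definition~\ref{def-trans-approve}. The \lbft transformation only enlarges $k$ and~$j$ via the additional maxima with $h_0-1$ and $l-303$, so its implicit prevotes and precommits form a subset of those from Definition~\ref{def-trans-approve}; rules~\ref{rule_1} and~\ref{rule_3} are inherited directly from this inclusion. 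The main obstacle I anticipate is the remaining check of rule~\ref{rule_2}, namely that every \lbft precommit is still backed by a prevote of~$P$; I would handle it by a case split on which term attains the maximum defining~$k$. When $k=h_{\text{previous}}$, any precommit at height $s\leq k$ satisfies $s\geq j>j_2$, so by the definition of~$j_2$ the prevote of~$P$ for~$B_s$ is already present in chain~$C$. When $k>h_{\text{previous}}$, the formula for~$j$ forces $j\geq k+1$, so all precommits are at heights $s\geq k+1$ and are covered by the prevotes cast via step~\ref{lisk-bft-transformation-a} in the current header.
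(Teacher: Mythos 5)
Your arguments for parts \ref{lem-lisk-bft-property-a}--\ref{lem-lisk-bft-property-d} are essentially the paper's: (a) reads off part~5 of Definition~\ref{def-lisk-bft}, (b) combines the monotonicity of the largest previously proposed height with the monotonicity of the best prevoted height along the canonical chain, and (c)--(d) follow from (a)--(b) exactly as in the paper (your phrasing of (c) via the longest-branch tie-break among branches attaining the current maximum is equivalent to the paper's ``either $h_{\text{prevoted}}$ strictly increases or the tip height strictly increases''). For part~\ref{lem-lisk-bft-property-e} you take a genuinely different route: you reduce to Lemma~\ref{lem-approve-satisfies-protocol-rules} by checking the \monoRule for the implicit approve messages and then transfer rules \ref{rule_1} and \ref{rule_3} through the subset relation between the \lbft-implied messages and the approve-implied ones, verifying \ref{rule_2} separately. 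The paper instead re-proves all three rules directly for the \lbft transformation. Your reduction is more economical where it works, and your case split for \ref{rule_2} (on whether $k=h_{\text{previous}}$ or $k\in\{h_0-1,\,l-303\}$, in which latter case $j\geq k+1$) is correct and is exactly the point the paper also has to argue.

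There is, however, one step that does not go through as stated: rule \ref{rule_3} is \emph{not} inherited under taking a subset of the implied messages. Shrinking the set of $P$'s own precommit--prevote pairs only reduces the pairs to be checked, which is fine; but condition \ref{rule_3_b} is an \emph{existential} requirement that the chain up to~$T'$ contain a block~$\bar{B}$ with prevotes by $>2/3$ of \emph{all} proposers, and since every delegate uses the truncated \lbft transformation, the blockchain contains fewer prevotes than under Definition~\ref{def-trans-approve}; a witness $\bar{B}$ for the approve-implied messages need not remain a witness for the \lbft-implied ones. The fix is short and is what both the paper's proof and the proof of Lemma~\ref{lem-approve-satisfies-protocol-rules} actually do: after establishing that $T'$ is proposed after~$T$, take $\bar{B}$ to be the ancestor of~$T'$ at height $h_{\text{prevoted}}(T')$; by the definition of $h_{\text{prevoted}}$ this block has at least $68$ \lbft-implied prevotes in the chain up to~$T'$, and $h(A)\leq h_{\text{prevoted}}(T)\leq h_{\text{prevoted}}(T')=h(\bar{B})$ by part~\ref{lem-lisk-bft-property-b}. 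With that replacement (and noting that a block with $h_{\text{previous}}(B_l)\geq h(B_l)$ implies no messages at all, so the degenerate case $k=k'$ in the \monoRule never arises) your argument is complete.
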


\begin{proof}
\begin{enumerate}[label=(\alph*)]
\item According to the \longestchain rule, the first deciding factor is the largest height of a block with $>2/3$~prevotes contained in some branch of the block tree. For a chain with tip~$A$, the largest height of a block with $>2/3$~prevotes is given by~$h_{\text{prevoted}}(A)$  because the prevotes implied by block~$A$ are not taken into account. If there are multiple chains that contain a block with $>2/3$~prevotes of the same height, then the longer chain containing these prevotes has priority according to the \longestchain rule. This immediately yields the condition in the claim.

\item As $B_2$ is proposed after~$B_1$ and the largest height of a block previously proposed is non-decreasing, we must have $h_{\text{previous}}(B_1)\leq h_{\text{previous}}(B_2)$. As validator~$\validator$ follows the  \longestchain rule, the tips of the canonical chain of~$\validator$ satisfy that $h_{\text{prevoted}}(\cdot)$ is non-decreasing by claim  \ref{lem-lisk-bft-property-a}. Thus, we must have $h_{\text{prevoted}}(B_1)\leq h_{\text{prevoted}}(B_2)$.
Moreover, when proposing $B_2$, the largest height of a block previously proposed must be at least~$h(B_1)$. This implies $h(B_1)\leq h_{\text{previous}}(B_2)$.

\item By claim \ref{lem-lisk-bft-property-a}, every time validator~$\validator$ switches to a different chain, either $h_{\text{prevoted}}(\cdot)$ is strictly increasing or it remains unchanged and the height of the tip strictly increases.
The same holds every time a new child is added to the tip of the chain of validator~$\validator$ (either proposed by~$\validator$ or another validator).
Thus, if $h_{\text{prevoted}}(B_1)=h_{\text{prevoted}}(B_2)$ holds, then we must have $h(B_1)<h(B_2)$.

\item Assume $B_1$ is proposed before~$B_2$. By claim \ref{lem-lisk-bft-property-b}, we must have $h_{\text{previous}}(B_1) \leq h_{\text{previous}}(B_2)$ and $ h_{\text{prevoted}}(B_1) \leq h_{\text{prevoted}}(B_2)$. Moreover, by \ref{lem-lisk-bft-property-c}, we have $h(B_1)<h(B_2)$ if $h_{\text{prevoted}}(B_1) = h_{\text{prevoted}}(B_2)$. Thus, Inequality~\eqref{inequ-lexicographic} must hold.
The other direction follows by contraposition.

\item We need to show that the implied prevotes and precommits according to the transformation in Definition~\ref{def-lisk-bft} satisfy the protocol rules
\ref{rule_1}--\ref{rule_3} of the general consensus framework. The proof works analogous to the proof of Lemma~\ref{lem-approve-satisfies-protocol-rules}.

Let $T$ and $T'$ be two distinct blocks proposed by a validator~$\validator$, where $T'$ is proposed after~$T$. By claim~\ref{lem-lisk-bft-property-b}, we have $h(T)\leq h_{\text{previous}}(T')$. Any $\prevote(A,T,\validator)$ implied by~$T$ satisfies
$h(A)\leq h(T)$ and any  $\prevote(A',T',\validator)$ implied by~$T'$ satisfies $h(A')> h_{\text{previous}}(T')\geq h(T)$. This means that all implied prevotes are for distinct heights, i.e., \ref{rule_1} is satisfied. 

Next, consider some $\precommit(A,B_l,\validator)$ implied by a block~$B_l$, where we use the notation from Definition~\ref{def-lisk-bft}. We then have $h(A)> j_2$
and hence $\validator$ must have cast a prevote for~$A$ (implied by~\ref{lisk-bft-transformation-a} or a previous block by~$\validator$ in the current chain). By the definition of the transformation there are $>2/3$ prevotes for~$A$ in the chain up to~$B_{l-1}$.
Hence, \ref{rule_2} is satisfied for all precommit messages.

Finally, for showing \ref{rule_3}, consider a $\precommit(A,T,\validator)$ and a $\prevote(A',T',\validator)$  with $h(A)<h(A')$ implied by two blocks~$T$ and $T'$. If $A'$ is a descendant of~$A$, then \ref{rule_3} is satisfied. Otherwise, we must show that there exists a
block~$\bar{B}$ at height~$h(\bar{B})\geq h(A)$ with $\prevote(\bar{B},\cdot, \cdot)$ by~$>2/3$ of the validators contained in the chain up to block~$T'$.

We first show that $T'$ must be proposed after~$T$. Assume, for the sake of contradiction, that $T'$ is proposed before~$T$. By claim~\ref{lem-lisk-bft-property-b}, we then have $h(T')\leq h_{\text{previous}}(T)< h(T)$ and, 
in particular, $h(A')\leq h(T') < h(T)$.
By our assumption, $A'$ is not a descendant of~$A$ so
$A'$ cannot be on the branch containing~$A$ and $T$. But then the branch containing~$A$ and $T$ does not contain any prevote for a block at height~$h(A')$ because \ref{rule_1} is satisfied. This means that any precommit implied by~$T$ must be for blocks at height at least~$h(A')+1$, which contradicts $h(A)<h(A')$ and that $\precommit(A,T,\validator)$ is implied by~$T$.
Hence, $T'$ must be proposed after~$T$. 

By claim~\ref{lem-lisk-bft-property-b}, we then obtain $h_{\text{prevoted}}(T) \leq h_{\text{prevoted}}(T')$. We further have $h(A) \leq h_{\text{prevoted}}(T)$ because $
h_{\text{prevoted}}(T)$ is the largest height of a block with prevotes by~$>2/3$ of the validators included in the chain up to~$T$. Let $\bar{B}$ be the block at height~$h_{\text{prevoted}}(T')$ that is an ancestor of~$T'$. We have 
$h(A)\leq h_{\text{prevoted}}(T) \leq h_{\text{prevoted}}(T')=h(\bar{B})$
and there are $\prevote(\bar{B},\cdot, \cdot)$ by~$>2/3$~validators contained in the blockchain up to~$T'$. Thus,~\ref{rule_3} is also satisfied. \qedhere
\end{enumerate}
\end{proof}

Let $A$ and $B$ be two distinct blocks proposed by a validator~$\validator$. Using Lemma~\ref{lem-lisk-bft}~\ref{lem-lisk-bft-property-d}, we can derive which of the blocks was proposed first assuming that validator~$\validator$ follows the \lbft protocol. We call the blocks~$A$ and $B$ \emph{contradicting} if, assuming this proposal order, they do \emph{not} satisfy one of the inequalities in Lemma~\ref{lem-lisk-bft}~\ref{lem-lisk-bft-property-b} or the condition of Lemma~\ref{lem-lisk-bft}~\ref{lem-lisk-bft-property-c}.

Lemma~\ref{lem-lisk-bft} directly yields the following two properties: 
The blocks proposed by a validator obeying  the \lbft protocol are never contradicting. Secondly, if none of the blocks of a validator are contradicting, then the implied prevotes and precommits satisfy the conditions of the general consensus framework in Definition~\ref{def-consensus-framework} due to Lemma~\ref{lem-lisk-bft}~\ref{lem-lisk-bft-property-e}. Hence, in order to detect whether the implied prevotes and precommits of a validator violate the general consensus framework, it is sufficient to check whether any pair of blocks by that validator is contradicting.

Note that we obtain all properties in the above paragraph if we only require blocks to satisfy all inequalities in Lemma~\ref{lem-lisk-bft}~\ref{lem-lisk-bft-property-b}. A violation of Lemma~\ref{lem-lisk-bft}~\ref{lem-lisk-bft-property-c} only implies a violation of the fork choice rule, but not that the prevotes and precommits violate the general consensus framework. For the liveness property, it is, however, important that validators choose one chain according to the fork choice rule and, in particular, do not forge multiple blocks with the same height and the same $h_{\text{prevoted}}$ on different chains because this causes a tie in the fork choice rule if the blocks are at the tip of the chains.

The next lemma allows to reduce the number of checks to ensure that the blocks of a validator are not contradicting. Instead of checking the inequalities in Lemma~\ref{lem-lisk-bft}~\ref{lem-lisk-bft-property-b} and the condition of Lemma~\ref{lem-lisk-bft}~\ref{lem-lisk-bft-property-c} for any pair of blocks by the same validator, it is sufficient to only check pairs of successive blocks by the same validator in one chain.

\begin{lemma}
Let $C := B_0, B_1, \ldots, B_{l}$ be a chain of blocks and $B_i, B_j, B_k$ blocks forged by validator~$\validator$ such that $0 \leq i <j <k\leq l$. If the pairs of blocks $ (B_i,B_j)$ and $(B_j,B_k)$ are not contradicting, then the pair of blocks $(B_i, B_k)$ is also not contradicting.
\end{lemma}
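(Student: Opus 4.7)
The plan is to exploit the fact that the three blocks all lie in the common chain~$C$. Since $B_i$ is an ancestor of $B_j$ and $B_j$ an ancestor of $B_k$, each block must already be present in the tree when its descendant is proposed, so the actual proposal order is $B_i, B_j, B_k$. In addition, $h_{\text{prevoted}}$ is non-decreasing along the chain (adding more blocks only grows the set of prevotes recorded in the chain, so the maximum height of a block with at least~$68$~prevotes can only rise), which together with the non-contradicting hypothesis forces the lexicographic order from Lemma~\ref{lem-lisk-bft}~\ref{lem-lisk-bft-property-d} to coincide with the height order on each of the three pairs. The role of ``earlier block'' in the applications of Lemma~\ref{lem-lisk-bft}~\ref{lem-lisk-bft-property-b} and \ref{lem-lisk-bft-property-c} is therefore always played by the block of smaller height, and no case analysis is required.

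With the orders pinned down, unpacking non-contradicting of $(B_i, B_j)$ with $B_i$ as the earlier block via Lemma~\ref{lem-lisk-bft}~\ref{lem-lisk-bft-property-b} yields
\[
h_{\text{previous}}(B_i) \leq h_{\text{previous}}(B_j), \quad h_{\text{prevoted}}(B_i) \leq h_{\text{prevoted}}(B_j), \quad h(B_i) \leq h_{\text{previous}}(B_j),
\]
while the implication in Lemma~\ref{lem-lisk-bft}~\ref{lem-lisk-bft-property-c} is automatic because $h(B_i) = i < j = h(B_j)$. The analogous three inequalities hold for $(B_j, B_k)$.

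The rest is pure transitivity. Chaining the first two sets of inequalities immediately gives $h_{\text{previous}}(B_i) \leq h_{\text{previous}}(B_k)$ and $h_{\text{prevoted}}(B_i) \leq h_{\text{prevoted}}(B_k)$, while $h(B_i) \leq h_{\text{previous}}(B_j) \leq h_{\text{previous}}(B_k)$ supplies the third inequality for $(B_i, B_k)$. The condition of Lemma~\ref{lem-lisk-bft}~\ref{lem-lisk-bft-property-c} for $(B_i, B_k)$ is again automatic since $h(B_i) < h(B_k)$, so $(B_i, B_k)$ is not contradicting. The only step that really needs thought is the opening observation — that the chain structure together with monotonicity of $h_{\text{prevoted}}$ pins the proposal order down to the height order — and once that is granted the proof consists of two applications of transitivity of~$\leq$.
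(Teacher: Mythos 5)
Your proof is correct and takes essentially the same route as the paper's: after fixing the proposal order, both arguments chain the inequalities of Lemma~\ref{lem-lisk-bft}~\ref{lem-lisk-bft-property-b} by transitivity, using $h(B_i)\leq h_{\text{previous}}(B_j)\leq h_{\text{previous}}(B_k)$ for the third one. The only (cosmetic) differences are that you make explicit the preliminary step of pinning the assumed proposal order to the height order via the monotonicity of $h_{\text{prevoted}}$ along the chain, which the paper leaves implicit, and that you dispatch the condition of Lemma~\ref{lem-lisk-bft}~\ref{lem-lisk-bft-property-c} by noting $h(B_i)=i<k=h(B_k)$ directly, whereas the paper derives it by chaining the equalities $h_{\text{prevoted}}(B_i)=h_{\text{prevoted}}(B_j)=h_{\text{prevoted}}(B_k)$.
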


\begin{proof}
By assumption, we have $h_{\text{previous}}(B_i) \leq h_{\text{previous}}(B_j)$ and $h_{\text{previous}}(B_j) \leq h_{\text{previous}}(B_k)$. 
We therefore obtain $h_{\text{previous}}(B_i) \leq h_{\text{previous}}(B_k)$. The same argument shows  $h_{\text{prevoted}}(B_i) \leq h_{\text{prevoted}}(B_k)$. We further have $h(B_i)  \leq h_{\text{previous}}(B_j)\leq h_{\text{previous}}(B_k)$ and therefore $(B_i,B_k)$ satisfies all three inequalities in Lemma~\ref{lem-lisk-bft}~\ref{lem-lisk-bft-property-b}. 

Moreover, if $h_{\text{prevoted}}(B_i) = h_{\text{prevoted}}(B_k)$ holds, then we must have both $h_{\text{prevoted}}(B_i) = h_{\text{prevoted}}(B_j)$ and
$h_{\text{prevoted}}(B_j) = h_{\text{prevoted}}(B_k)$.
This implies $h(B_i)<h(B_j)$ and $h(B_j)<h(B_k)$ by assumption. Hence, we also have $h(B_i)<h(B_k)$ implying that $B_i$ and $B_k$ are not contradicting.
\end{proof}

Using Lemma~\ref{lem-lisk-bft}, we now show the desired safety, liveness and accountability properties of the \lbft protocol.
Note that for showing the liveness property we assume that~$<1/3$~of the validators are offline, i.e., not forging any blocks, and we do not consider Byzantine validators. As validators cast prevotes and precommits together with the block proposal in the \lbft protocol, Byzantine validators could intentionally create competing branches in the block tree so that the prevotes and precommits of some honest validators are not considered as their block is not eventually included in the canonical chain. Such an attack would only be successful if a substantial number of Byzantine validators coordinate the creation of competing branches and at regular times multiple Byzantine validators can consecutive propose blocks so that the competing chain becomes longer than the chain previously forged by the honest validators. We believe that such an attack is rather unlikely as it requires a lot of coordination and a favorable validator ordering for the attackers. Further, it only effects the liveness property, but not the safety property.

\begin{theorem}\label{theo-lisk-bft-props}
The \lbft protocol with parameters ~$\tau \in (1/3, 1]$ and $\rho\in \mathbb{N}$ has the following properties:
\begin{enumerate}[label=(\alph*)]
\item 
Let $\mathcal{B}$ be the set of blocks of the block tree.
If $\bigcap_{B\in \mathcal{B}} \validatorSet_B$ contains $>(4/3-\tau)$~obeying the \lbft protocol, then two honest validator in~$\bigcup_{B\in \mathcal{B}} \validatorSet_B$ with decision threshold at least~$\tau$ never finalize conflicting blocks.
\label{lisk-bft-safety1}
\item 
Assume that for any branch in the block tree $<\alpha$~among the validators obeying the \lbft protocol change without a block obtaining $>\tau$~precommits. 
If any active validator set contains $>4/3-\tau+2 \alpha$~validators obeying the protocol, then two honest validators with decision threshold at least~$\tau$ never decide for conflicting blocks.
\label{lisk-bft-safety2}
\item 
Assume all active validators in the block tree follow the general consensus protocol and \longestchain fork choice rule and $<1/3$ of the validators in any active validator set crash. 
If after the global stabilization time is reached, there is regularly an unchanged validator set for $\rho$ consecutive blocks, then for any $l \in \mathbb{N}$, an honest validator~$\validator$ with decision threshold~$\tau_\validator \in (1/3,2/3]$ will eventually decide on a block at height~$l$.\label{lisk-bft-liveness}
\item If a Byzantine validator forges blocks such that the implied prevotes and precommits violate the general consensus framework, then this protocol violation can be detected and the Byzantine validator can be identified. 
\label{lisk-bft-accountability}
\end{enumerate}
\end{theorem}

\begin{proof}
By Lemma~\ref{lem-lisk-bft}, the implied prevotes and precommits in the \lbft protocol satisfy the \lbft protocol rules \ref{rule_1}--\ref{rule_3}. Then the first claim follows directly from Theorem~\ref{theo-unconstrained-change-validators} and the second claim from Theorem~\ref{theo-constrained-change-validators}.

For the liveness property in the third claim, consider an arbitrary state of the block tree and an arbitrary set of blocks proposed according
to the \lbft protocol. 
Moreover, we assume that we have reached the global stabilization time~$\gst$ so that all messages arrive reliably within time~$\Delta$. 
Let further $l \in \mathbb{N}$ and $h_{\text{max}}$ be the largest height that any validator proposed a block for.
We show that an honest validator with decision threshold~$\tau_\validator \in (1/3,2/3]$ eventually finalizes a block at height~$l$.

After time~$\gst+\Delta$, all honest validators are aware of all competing chains. There could be multiple competing chains for which $h_{\text{previous}}(\cdot)$ and $h(\cdot)$ of the tip of the chain is the same, but one block proposed by an honest validator always breaks these ties. Afterwards, in every proposal slot, there is one unique block~$A$ such that for every other block~$A'$ either $h_{\text{prevoted}}(A)> h_{\text{prevoted}}(A')$ or $h_{\text{prevoted}}(A)= h_{\text{prevoted}}(A')$ and $h(A)>h(A')$ holds. This means $A$ is the tip of the canonical chain of every honest validator. 

This canonical chain for all honest validators will continue growing until its tip has height $>\max\{l, h_{\text{max}}\}$. 
By assumption, there will be~$\rho$ consecutive blocks~$B^{(1)}, \ldots,B^{(\rho)}$ appended to the canonical chain such that $\blockHeight{B^{(1)}}>\max\{l, h_{\text{max}}\}$ and $\bigcap_{i \in \{1,\ldots,\rho\}} \validatorSet_{B^{(i)}}$ contains an honest set of validators~$\validatorSet_H$ of weight $>2/3$. 
We show that $B^{(1)}$ will be considered final by any honest validator $\validator$ with decision threshold $\tau_\validator \in (1/3,2/3]$.

The  \lbft protocol assumes that the block proposal mechanism assigns block slots in a round-robin fashion and that  $3 |\validatorSet_{B^{(i)}} | \leq \rho$ for $i \in \{1,\ldots,\rho\}$. 
In particular, this implies that there is $j \in  \{1,\ldots,\rho\}$ such that every validator in 
$\validatorSet_H$ forges at least one of the blocks in $B^{(1)}, \ldots,B^{(j)}$ and at least one of the blocks in $B^{(j+1)}, \ldots,B^{(\rho)}$. 
The first block $B^{(i)}$ in $B^{(1)}, \ldots,B^{(j)}$ proposed by an honest validator~$\validator\in\validatorSet_H $ satisfies $h_{\text{previous}}\left(B^{(i)}\right)<\blockHeight{B^{(1)}}$ as we assume there is only one growing canonical chain and hence $\validator$ did not forge any block at height $\geq h(B^{(1)})$ already. 
By the transformation in Definition~\ref{def-lisk-bft}, $B^{(i)}$ implies a prevote for~$B^{(1)}$ as~$\validator$ was active at height~$\blockHeight{B^{(1)}}$ and $\blockHeight{B^{(i)}}-\blockHeight{B^{(1)}}\leq \rho -1$. 
Hence, after $B^{(j)}$ is added to the chain, every validator in $\validatorSet_H$ has cast a prevote for $B^{(1)}$, i.e., $B^{(1)}$ has prevotes by $>2/3$ of the validators included in the chain up to~$B^{(j)}$.
If a validator $\validator\in \validatorSet_H$ has not already cast a precommit for~$B^{(1)}$, then the first subsequent block~$B^{(i)}$ with~$i \geq j+1$ by~$\validator$ implies a precommit by~$\validator$ for~$B^{(1)}$, because the following holds for the transformation in Definition~\ref{def-lisk-bft}:
\begin{itemize}[itemsep=2pt]
\item $j_2<\blockHeight{B^{(1)}}$, as all honest validators add blocks to the same canonical chain extending~$B^{(1)}$,
\item $j_1<\blockHeight{B^{(1)}}$, because otherwise $\validator$ already made a precommit for~$B^{(1)}$,
\item $h_0 \leq \blockHeight{B^{(1)}}$, since all validators in~$\validatorSet_H$ are continuously active, 
\item $\blockHeight{B^{(i)}}-\blockHeight{B^{(1)}}\leq \rho -1$.    
\end{itemize}
Hence, after $B^{(\rho)}$ is added to the chain, there are precommits for~$B^{(1)}$ by every validator in $\validatorSet_H$ in the current chain as each of them forges at least one block in $B^{(j+1)}, \ldots,B^{(\rho)}$ by assumption. 
As the weight of validators in  $\validatorSet_H$ is larger than $2/3$, the chain up to block~$B^{(\rho)}$ therefore contains precommits by~$>2/3$ precommits for~$B^{(1)}$. Thus, any honest validator $\validator$ with decision threshold $\tau_\validator \in (1/3,2/3]$ can decide for $B^{(1)}$ and all ancestors of $B^{(1)}$. 
In particular, any honest validator finalizes a block at height~$l$ as~$\blockHeight{B^{(1)}}>l$.

For the last claim, note that Lemma~\ref{lem-lisk-bft}
\ref{lem-lisk-bft-property-b} states the conditions that must hold if a validator follows the \lbft protocol. By Lemma~\ref{lem-lisk-bft}~\ref{lem-lisk-bft-property-e}, these are also
sufficient for the implied prevotes and precommits to satisfy \ref{rule_1}--\ref{rule_3} of the general consensus framework. In order to determine whether the implied prevotes and precommits of blocks forged by a validator~$\validator$ obey the protocol rules~\ref{rule_1}--\ref{rule_3}, it is therefore sufficient to consider every pair of blocks~$B_1$ and $B_2$ proposed by~$\validator$,
determine which one was proposed first according to Lemma~\ref{lem-lisk-bft}~\ref{lem-lisk-bft-property-d}, and then check if the inequalities given in Lemma~\ref{lem-lisk-bft}~
\ref{lem-lisk-bft-property-b} hold.
\end{proof}

\section{Conclusion}

We introduced a general consensus framework for blockchains which allows to prove safety and liveness depending on the number of Byzantine validators. As a concrete application of the consensus framework we defined the \lbft protocol, which allows to obtain strong guarantees with respect to safety and liveness while requiring no additional messages for the consensus as the necessary information is efficiently
encoded in the blocks.

Another interesting possible application of the general consensus framework could be to combine it together with a
suitable aggregate signature scheme. For instance, validators could send a $\approve(h(B_l)-1,\cdot,B_l,\validator)$ message
for the block they believe to be the correct block at height~$l$. Using an aggregate signature scheme, these messages could be efficiently aggregated in the network and only added to any child block of~$B_l$ by the next validator. Applying the properties of the general consensus framework analogous to the case of the \lbft protocol, it is possible to show that a block can be finalized already after two additional blocks are added as descendants.

\printbibliography

\end{document}